\newtheorem{theorem}{Theorem}
\newtheorem{definition}{Definition}
\newtheorem{subsec:coding}{subsec:coding}
\newtheorem{lemma}{Lemma}
\begin{document}

\title{On Cooperative Relay Networks with Video Applications}

\author{
\authorblockN{Donglin Hu \ \ and \ \ Shiwen Mao \\}
\authorblockA{Department of Electrical and Computer Engineering  \\
Auburn University, Auburn, AL, USA} 
}

\maketitle

\pagestyle{plain}\thispagestyle{plain}

%\ls{0.89}

\begin{abstract}
In this paper, we investigate the problem of cooperative relay in CR networks for further enhanced network performance. In particular, we focus on the two representative cooperative relay strategies, and develop optimal spectrum sensing and $p$-Persistent CSMA for spectrum access. Then, we study the problem of cooperative relay in CR networks for video streaming. We incorporate interference alignment to allow transmitters collaboratively send encoded signals to all CR users. In the cases of a single licensed channel and multiple licensed channels with channel bonding, we develop an optimal distributed algorithm with proven convergence and convergence speed. In the case of multiple channels without channel bonding, we develop a greedy algorithm with bounded performance.
\end{abstract}

%-------------------------------------------------------------------
% Introduction
\section{Introduction}
{\em Cooperative relay} in CR networks~\cite{Simeone07, Zhang09} represents another new paradigm for wireless communications. It allows wireless CR nodes to assist each other in data delivery, with the objective of achieving greater reliability and efficiency than each of them could attain individually (i.e., to achieve the so-called {\em cooperative diversity}). Cooperation among CR nodes enables opportunistic use of energy and bandwidth resources in wireless networks, and can deliver many salient advantages over conventional point-to-point wireless communications.

Recently, there has been some interesting work on cooperative relay in CR networks~\cite{Hu10GC,Simeone07, Zhang09}.  In~\cite{Simeone07}, the authors considered the case of two single-user links, one primary and one secondary.  The secondary transmitter is allowed to act as a ``transparent'' relay for the primary link, motivated by the rationale that helping primary users 
%finish their transmissions early 
will lead to more transmission opportunities for CR nodes.  In~\cite{Zhang09}, the authors presented an excellent overview of several cooperative relay scenarios and various related issues. A new MAC protocol was proposed and implemented in a testbed to select a spectrum-rich CR node as relay for a CR transmitter/receiver pair. 

We investigate cooperative relay in CR networks, using video as a reference application to make the best use of the enhanced network capacity~\cite{Hu12IC}. We consider a base station (BS) and multiple relay nodes (RN) that collaboratively stream multiple videos to CR users within the network. To support high quality video service in such a challenging environment, we assume a well planned relay network where the RNs are connected to the BS with high-speed wireline links. Therefore the video packets will be available at both the BS and the RNs before their scheduled transmission time, thus allowing advanced cooperative transmission techniques to be adopted for streaming videos. In particular, we consider interference alignment, where the BS and RNs simultaneously transmit encoded signals to all CR users, such that undesired signals will be canceled and the desired signal can be decoded at each CR user~\cite{Tse05, Cadambe08}. In~\cite{Li10}, such cooperative sender-side techniques are termed {\em interference alignment}, while receiver-side techniques that use overheard (or exchanged via a wireline link) packets to cancel interference is termed {\em interference cancelllation}. We present a stochastic programming formulation of the problem of interference alignment for video streaming in cooperative CR networks and then a reformulation of the problem based on Linear Algebra theory~\cite{Strang09}, such that the number of variables and computational complexity can be greatly reduced. To address the formulated problem, we propose an optimal distributed algorithm with proven convergence and convergence rate, and then a greedy algorithm with a proven performance bound.

The remainder of this paper is organized as follows. Related work is discussed in Section~\ref{sec:coop_work}. In Section~\ref{sec:cr_relay}, we compare two cooperative relay strategies in CR networks. We investigate the problem of cooperative CR relay with interference alignment for MGS video streaming in Section~\ref{sec:cr_video_coop}. Section~\ref{sec:coop_conc} concludes the paper.

%-------------------------------------------------------------------
% Background and Related Work
\section{Background and Related Work}\label{sec:coop_work}
The theoretical foundation of relay channels was laid by the seminal work~\cite{Cover79}. 
The capacities of the Gaussian relay channel and certain discrete relay channels are evaluated, and the achievable lower bound to the capacity of the general relay channel is established in this work.  In~\cite{Sendonaris03a, Sendonaris03b}, the authors described the concept of cooperative diversity,
where diversity gains are achieved via the cooperation of mobile users. 
In~\cite{Laneman04}, the authors developed and analyzed low-complexity cooperative
diversity protocols. Several cooperative strategies, including AF and DF, were described and their performance characterizations were derived in terms of outage probabilities.

In practice, there is a restriction that each node cannot transmit and receive simultaneously in the same frequency band. The ``cheap'' relay channel concept was introduced in~\cite{Khojastepour03}, where the authors 
derived the capacity of the Gaussian degraded ``cheap'' relay channel. Multiple relay nodes for a transmitter-receiver pair are investigated in~\cite{Zhao07b} and~\cite{Bletsas06}. 
The authors showed that, when compared with complex protocols that involve all relays, the simplified protocol with no more than one relay chosen can achieve the same performance. This is the reason why we consider single relay in this paper. 

In~\cite{Ng07}, Ng and Yu proposed a utility maximization framework for joint optimization of node, relay strategy selection, and power, bandwidth and rate allocation in a cellular network. 
Cai et al.~\cite{Cai08} presented a semi-distributed algorithm for AF relay networks. A heuristic was adopted to select relay and allocate power.  Both AF and DF were considered in~\cite{Shi08}, where a polynomial time algorithm for optimal relay selection was developed and proved to be optimal.
In~\cite{Ding10}, a protocol is proposed for joint routing, relay selection, and dynamic spectrum allocation for multi-hop CR networks, and its performance is evaluated through simulations. 

The problem of video over CR networks has only been studied in a few recent papers~\cite{Hu10JSAC, Shiang08, Ding09, Hu10TW, Luo11}.
In~\cite{Shiang08}, a dynamic channel selection scheme was proposed for CR users to transmit videos over multiple channels. In~\cite{Ding09}, a distributed joint routing and spectrum sharing algorithm for video streaming over CR ad hoc networks was described and evaluated with simulations. In our prior work, we considered video multicast in an infrastructure-based CR network~\cite{Hu10JSAC}, unicast video streaming over multihop CR networks~\cite{Hu10TW} and CR femtocell networks~\cite{Hu12JSAC}.  
In~\cite{Luo11}, the impact of system parameters residing in different network layers are jointly considered to achieve the best possible video quality for CR users. 
Unlike the heuristic approaches in~\cite{Shiang08, Ding09}, the analytical and optimization approach taken in this paper yields algorithms with optimal or bounded performance. The cooperative relay and interference alignment techniques also distinguish this paper from prior work on this topic. 

As point-to-point link capacity approaches the Shannon limit, there has been considerable interest on exploiting interference to improve wireless network capacity~\cite{Tse05, Cadambe08, Katti07, Gollakota09, Li10}. In addition to information theoretic work on asymptotic capacity~\cite{Tse05, Cadambe08}, practical issues have been addressed in~\cite{Katti07, Gollakota09, Li10}.
In~\cite{Katti07}, the authors presented a practical design of analog network coding to exploit interference and allow concurrent transmissions, which does not make any synchronization assumptions. 
In~\cite{Gollakota09}, interference alignment and cancellation is incorporated in MIMO LANs, and the network capacity is shown, analytically and experimentally, to be almost doubled. 
In~\cite{Li10}, the authors presented a general algorithm for identifying interference alignment and cancellation opportunities in practical multi-hop mesh networks. The impact of synchronization and channel estimation was evaluated through a GNU Radio implementation. Our work was motivated by these interesting papers, and we incorporate interference alignment in cooperative CR networks and exploit the enhanced capacity for wireless video streaming.

%-------------------------------------------------------------------
% CR and Cooperative Networking
\section{CR and Cooperative Networking}\label{sec:cr_relay}
In this section, we investigate the problem of cooperative relay in CR networks~\cite{Hu10GC}.  We assume a primary network with multiple licensed bands and a CR network consisting of multiple cooperative relay links.  Each cooperative relay link consists of a CR transmitter, a CR relay, and a CR receiver. 
The objective is to develop effective mechanisms to integrate these two wireless communication technologies, and to provide an analysis for the comparison of two representative cooperative relay strategies, i.e., {\em decode-and-forward} (DF) and {\em amplify-and-forward} (AF), in the context of CR networks.  We first consider cooperative spectrum sensing by the CR nodes. We model both types of sensing errors, i.e., miss detection and false alarm, and derive the optimal value for the sensing threshold.  Next, we incorporate DF and AF into the $p$-Persistent Carrier Sense Multiple Access (CSMA) protocol for channel access for the CR nodes.  We develop closed-form expressions for the network-wide capacities achieved by DF and AF, respectively, as well as that for the case of direct link transmission for comparison purpose. 

Through analytical and simulation evaluations of DF and AF-based cooperative relay strategies, we find the analysis provides upper bounds for the simulated results, which are reasonably tight.  We also find cross-point with the AF and DF curves when some system parameter is varied, indicating that each of them performs better in a certain parameter range.  There is no case that one completely dominates the other for the two strategies.  The considerable gaps between the cooperative relay results and the direct link results exemplify the diversity gain achieved by cooperative relays in CR networks. 

\subsection{Network Model and Assumptions \label{sec:mod}}

We assume a primary network and a spectrum band that is divided into $M$ licensed channels, 
each modeled as a time slotted, block-fading channel.  The state of each channel evolves independently following a discrete time Markov process.

As illustrated in Fig.~\ref{fig:network}, there is a CR network colocated with the primary network.  The CR network consists of $N$ sets of cooperative relay links, each including a CR transmitter, a CR relay, and a CR receiver. Each CR node (or, secondary user) is equipped with two transceivers, each incorporating a software defined radio (SDR) that is able to tune to any of the $M$ licensed channels and a control channel and operate from there.  

\begin{figure} [!t]
\centering
\includegraphics[width=4.5in, height=2.7in]{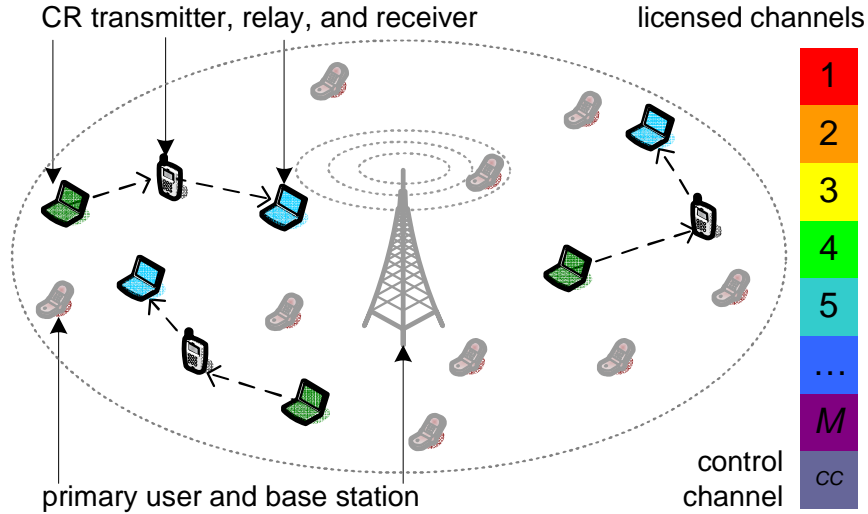}
\caption{Illustration of colocated primary and CR networks. The CR network consists of a number cooperative relay links, each consisting of a CR transmitter, a CR relay and a CR receiver.}
\label{fig:network}
\end{figure}

We assume CR nodes access the licensed channels following the same time slot structure~\cite{Zhao07a}.   In the sensing phase, a CR node chooses one of the $M$ channels to sense using one of its transceivers, and then exchanges sensed channel information  with other CR nodes using the other transceiver over the control channel.  During the transmission phase, the CR transmitter and/or relay 
transmit data frames on licensed channels that are believed to be idle based on sensing results, using one or both of the transceivers. 
We consider cooperative relay strategies AF and DF, and compare their performance in the following sections.

%\begin{figure}
%\centering
%\includegraphics[width=3.3in]{cr_relay/time-slot-structure.eps}
%\caption{Time slot structure: a time slot consists of a sensing phase and a transmission phase.}
%\label{fig:slotstruct}
%\end{figure}

%----------------------------------------------------------  
\subsection{Cooperative Relay in CR Networks \label{sec:ana_relay}}

In this section, we investigate how to effectively integrate the two advance wireless communication technologies, and present an analysis of the cooperative relay strategies in CR networks. We first examine cooperative spectrum sensing and 
derive the optimal sensing threshold.  We then consider cooperative relay and spectrum access, and derive the network-wide throughput performance achievable when these two technologies are integrated.

%----------------------------------------------------------
\subsubsection{Spectrum Sensing \label{subsec:sen}}
We assume there are $N_m$ CR nodes sensing channel $m$. After the sensing phase, each CR node obtains a {\em sensing result vector} 
$\vec{\Theta}_m=[\Theta_1^m, \Theta_2^m, \cdots, \Theta_{N_m}^m]$ for channel $m$.  The conditional probability 
$a_m(\vec{\Theta}_m)$ on channel $m$ availability is
\begin{eqnarray}\label{eq:CondAvail_relay}
%&& %\hspace{-0.25in} 
&& a_m(\Theta_1^m,\Theta_2^m,\cdots,\Theta_{N_m}^m) \nonumber \\
&\cong& \Pr\{H_0^m|\Theta_1^m,\Theta_2^m,\cdots,\Theta_{N_m}^m\} \nonumber \\
&=&\frac{\Pr\{\Theta_1^m,\Theta_2^m,\cdots,\Theta_{N_m}^m|H_0^m)\}\Pr\{H_0^m\}} {\sum_{j\in\{0,1\}} \Pr\{\Theta_1^m,\Theta_2^m,\cdots,\Theta_{N_m}^m|H_j^m\} \Pr\{H_j^m\}} \nonumber \\
&=&\frac{\prod_{i=1}^{N_m}\Pr\{\Theta_i^m|H_0^m\} \Pr\{H_0^m\}}{\sum_{j\in\{0,1\}}\prod_{i=1}^{N_m}\Pr\{\Theta_i^m|H_j^m\} \Pr\{H_j^m\}} \nonumber \\
&=&\left[ 1+\frac{\Pr\{H_1^m\}}{\Pr\{H_0^m\}}\prod_{i=1}^{N_m}\frac{\Pr\{\Theta_i^m|H_1^m\}}{\Pr\{\Theta_i^m|H_0^m\}} \right]^{-1} \nonumber \\
&=&\left[1+\frac{\eta_m}{1-\eta_m}\prod_{i=1}^{N_m}\frac{(\delta_i^m)^{1-\Theta_i^m}(1-\delta_i^m)^{\Theta_i^m}}{(\epsilon_i^m)^{\Theta_i^m}(1-\epsilon_i^m)^{1-\Theta_i^m}}\right]^{-1}.
\end{eqnarray}
If $a_m(\vec{\Theta}_m)$ is greater than a {\em sensing threshold} $\tau_m$, channel $m$ is believed to be idle; otherwise, channel $m$ is believed to be busy. The decision variable $D_m$ is defined as follows.
\begin{equation}\label{eq:DefDeci}
D_m=\left\{\begin{array}{ll}
0, & \mbox{if } a_m(\vec{\Theta}_m)>\tau_m \\
1, & \mbox{if } a_m(\vec{\Theta}_m)\leq\tau_m.
\end{array}\right.
\end{equation}

CR nodes only attempt to access channel $m$ where $D_m$ is $0$. Since function $a_m(\vec{\Theta}_m)$ in (\ref{eq:CondAvail_relay}) has $N_m$ binary variables, there can be $2^{N_m}$ different combinations corresponding to $2^{N_m}$ values for $a_m(\vec{\Theta}_m)$. We sort the $2^{N_m}$ combinations according to their $a_m(\vec{\Theta}_m)$ values in the non-increasing order.  Let $a_m^{(j)}$ be the $j$th largest function value and $\vec{\theta}_m^{(j)}$ the argument that achieves the $j$th largest function value $a_m^{(j)}$, where 
$$
\vec{\theta}_m^{(j)}=[\theta_1^m(j),\theta_2^m(j),\cdots,\theta_{N_m}^m(j)].
$$

In the design of CR networks, we consider two objectives: (i) how to avoid harmful interference to primary users, and (ii) how to fully exploit spectrum opportunities for the CR nodes.  For primary user protection, we limit the collision probability with primary user with a threshold. 
Let $\gamma_m$ be the {\em tolerance threshold}, i.e., the maximum allowable interference probability with primary users on channel $m$. The probability of collision with primary users on channel $m$ is given as $\Pr\{D_m=0\:|\:H_1^m\}$; the probability of detecting an available transmission opportunity is $\Pr\{D_m=0\:|\:H_0^m\}$. Our objective is to maximize the probability of detecting available channels, while keeping the collision probability below $\gamma_m$. Therefore, the optimal spectrum sensing problem can be formulated as follows.  
\begin{eqnarray}\label{eq:SenseOpt}
\max_{\tau_m} & & \Pr\{D_m=0|H_0^m\} \label{eq:SenseOptObj} \\
\mbox{subect to:} & & \Pr\{D_m=0|H_1^m\} \le \gamma_m. \label{eq:SenseOptCnt}
\end{eqnarray}

From their definitions, both $\Pr\{D_m=0\:|\:H_1^m\}$ and $\Pr\{D_m=0\:|\:H_0^m\}$ are decreasing functions of $\tau_m$. 
As $\Pr\{D_m=0\:|\:H_1^m)\}$ approaches its maximum allowed value $\gamma_m$, $\Pr\{D_m=0\:|\:H_0^m\}$ also approaches its maximum.  Therefore, solving the optimization problem (\ref{eq:SenseOptObj}) $\sim$ (\ref{eq:SenseOptCnt}) is equivalent to solving 
$$
\Pr\{D_m=0\:|\:H_1^m\}=\gamma_m.  
$$ 
If $\tau_m = a_m^{(j)}$, we have 
\begin{eqnarray}
&& \Pr\{D_m=0|H_1^m\}(a_m^{(j)}) 
%\nonumber \\
%&=& 
= \Pr\{a_m (\vec{\Theta}_m) > a_m^{(j)}|H_1^m\}  \nonumber \\
&=& \sum_{l=1}^{j-1} \Pr\{a_m (\vec{\Theta}_m) = a_m^{(l)}|H_1^m\} 
%\nonumber \\ 
%&=& 
=\sum_{l=1}^{j-1} (\delta_i^m)^{1-\theta_i^m(l)}(1-\delta_i^m)^{\theta_i^m(l)}. \label{eq:PrD0H1}
\end{eqnarray}
Obviously, $\Pr\{D_m=0\;|\;H_1^m\}(a_m^{(j)})$ is an increasing function of $j$. 
The optimal sensing threshold $\tau_m^{*}$ can be set to $a_m^{(j)}$, such that 
$$
\Pr\{D_m=0\:|\:H_1^m\}(a_m^{(j)}) \leq \gamma_m
$$ 
and 
$$
\Pr\{D_m=0\:|\:H_1^m\}(a_m^{(j+1)}) > \gamma_m.
$$
The algorithm for computing the optimal sensing threshold $\tau_m^{*}$ is presented in Table~\ref{tab:sensing}. 

\begin{table} [!t]
\begin{center}
%\caption{Spectrum Sensing Algorithm}
\caption{Algorithm for Computing the Optimal Sensing Threshold}
\begin{tabular}{ll}
\hline
1:  & Compute $a_m^{(j)}$ and the corresponding $\vec{\theta}_m^{(j)}$, \\
    & for all $j$; \\ 
2:  & Initialize $p_c = \Pr\{a_m (\vec{\Theta}_m)=a_m^{(1)}|H_1^m \}$ and \\
    & $\tau_m=a_m^{(1)}$; \\
3:  & Set $j=1$; \\
4:  & {WHILE} ($p_c \leq \gamma_m$) \\
5:  & $\;\;$ $j=j+1$; \\
6:  & $\;\;$ $\tau_m=a_m^{(j)}$; \\
7:  & $\;\;$ $p_c = p_c + \Pr\{a_m(\vec{\Theta}_m)=a_m^{(j)}|H_1^m\}$; \\
8:  & {END WHILE} \\
\hline
\end{tabular}
\label{tab:sensing}
\end{center}
%\vspace{-0.2in}
\end{table}

Once the optimal sensing threshold $\tau_m^{*}$ 
%$a_m^{(j)}$ 
is determined, $\Pr\{D_m=0\:|\:H_1^m\}$ can be computed as given in (\ref{eq:PrD0H1}) and $\Pr\{D_m=0\:|\:H_0^m\}$ can be computed as:
\begin{eqnarray}
&&  \Pr\{D_m=0|H_0^m\} 
%\nonumber \\
%&=& 
= \Pr\{a_m (\vec{\Theta}_m) > \tau_m^{*}|H_0^m\} \nonumber \\
&=& \sum_{l=1}^{j-1} \Pr\{a_m (\vec{\Theta}_m) = a_m^{(l)}|H_0^m\} 
%\nonumber \\
%&=& 
= \sum_{l=1}^{j-1} (\epsilon_i^m)^{\theta_i^m(l)}(1-\epsilon_i^m)^{1-\theta_i^m(l)}. \label{eq:PrD0H0}
\end{eqnarray}

%----------------------------------------------------------
\subsubsection{Cooperative Relay Strategies \label{subsec:coop-relay}}

During the transmission phase, CR transmitters and relays attempt to send data through the channels that are believed to be idle.  
%We consider two cooperative relay strategies, i.e., AF and DF. 
%In order to compare two kinds of cooperation, AF and DF,  we 
We assume fixed length for all the data frames. 
%CR transmitter sends the fixed length packet at each time. 
Let $G_1^k$ and $G_2^k$ denote the path gains from the transmitter to relay and from the relay to receiver, respectively, 
%where $k$ is $k$th pair of transmitter and receiver. 
and let $\sigma_{r,k}^2$ and $\sigma_{d,k}^2$ denote the noise powers at the relay and receiver, respectively, for the $k$th cooperative relay link. We examine the two cooperation relay strategies DF and AF in the following. For comparison purpose, we also consider direct link transmission below.

\begin{figure} [!t]
\centering
\includegraphics[width=5.0in]{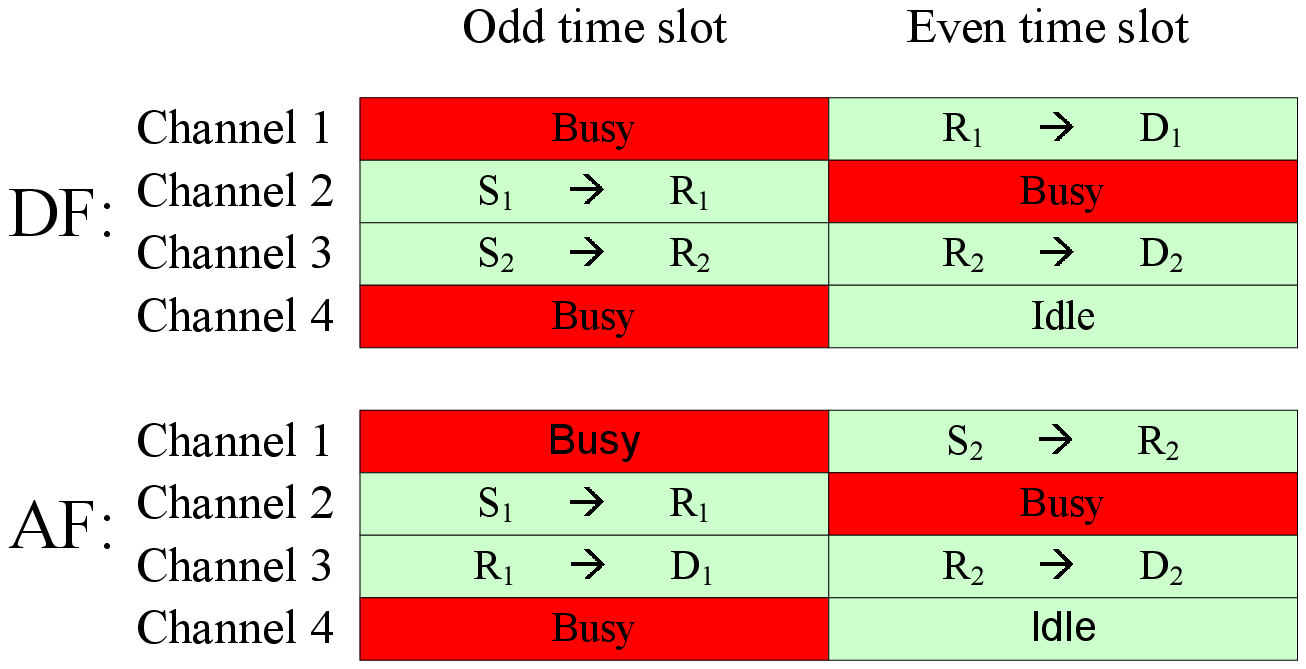}
\caption{Illustration of the protocol operation of AF and DF, where $S_i \Rightarrow R_i$ represents the transmission from source to relay and $R_i \Rightarrow D_i$ represents the transmission from relay to destination, for the $i$th cooperative relay link.}
\label{fig:protocol}
\end{figure}

%----------------------------------------------------------
\paragraph{Decode-and-Forward (DF)}

With DF, the CR transmitter and relay transmit separately on consecutive odd and event time slots: the CR transmitter sends data to the corresponding relay in an {\em odd} time slot; the relay node then decodes the data and forwards it to the receiver in the following {\em even} time slot, as shown in Fig.~\ref{fig:protocol}.  

Without loss of generality, we assume a data frame can be successfully decoded if the received signal-to-noise ratio (SNR) is no less than a {\em decoding threshold} $\kappa$. We assume gains on different links are independent to each other.  
The receiver can successfully decode the frame if it is not lost or corrupted on both links.  
The {\em decoding rate} of DF at the $k$th receiver, denoted by $P_{DF}^k$, can be computed as,
\begin{eqnarray}\label{eq:PrDF}
%P_{DF}^k \hspace{-0.025in} &=& \hspace{-0.025in} \Pr\left\{ \left( \frac{P_sG_1^k}{\sigma_{r,k}^2} \geq \kappa \right) \mbox{ and } \left( \frac{P_rG_2^k}{\sigma_{d,k}^2}\geq \kappa \right) \right\} \nonumber \\
%
P_{DF}^k \hspace{-0.025in} &=& \hspace{-0.025in} \Pr\left\{ \left( P_sG_1^k / \sigma_{r,k}^2 \geq \kappa \right) \mbox{ and } \left( P_rG_2^k / \sigma_{d,k}^2 \geq \kappa \right) \right\} \nonumber \\
%
%&=&\Pr\left\{ \frac{P_sG_1^k}{\sigma_{r,k}^2} \geq \kappa \right\} \Pr \left\{ \frac{P_rG_2^k}{\sigma_{d,k}^2}\geq \kappa \right\} \nonumber \\
%&=& \hspace{-0.025in} \left[ 1-F_{G_1^k}\left( \frac{\sigma_{r,k}^2\kappa}{P_s} \right) \right]  \hspace{-0.025in} \left[ 1-F_{G_2^k} \left( \frac{\sigma_{d,k}^2\kappa}{P_r} \right) \right], \nonumber \\ 
%&=&\bar{F}_{G_1^k} \left( \frac{\sigma_{r,k}^2\kappa}{P_s} \right) \bar{F}_{G_2^k} \left( \frac{\sigma_{d,k}^2 \kappa}{P_r} \right),
&=&\bar{F}_{G_1^k} \left( \sigma_{r,k}^2 \kappa / P_s \right) \bar{F}_{G_2^k} \left( \sigma_{d,k}^2 \kappa / P_r \right),
\end{eqnarray}
where $P_s$ and $P_r$ are the transmit powers at the transmitter and relay, respectively, 
$\bar{F}_{G_1^k}(x)$ and $\bar{F}_{G_2^k}(x)$ are the complementary cumulative distribution functions (CCDF) of path gains $G_1^k$ and $G_2^k$, respectively.

%----------------------------------------------------------
\paragraph{Amplify-and-Forward (AF)}

With AF, the CR transmitter and relay transmit simultaneously in the same time slot on different channels. A pipeline is formed connecting the CR transmitter to the relay and then to the receiver; the relay amplifies the received signal and immediately forwards it to the receiver in the same time slot, as shown in Fig.~\ref{fig:protocol}. Recall that the CR relay has two transceivers. The relay receives data from the transmitter using one transceiver operating on one or more idle channels; it forwards the data simultaneously to the receiver using the other transceiver operating on one or more {\em different} idle channels.  

With this cooperative relay strategy, a data frame can be successfully decoded if the SNR at the receiver is no less than the decoding threshold $\kappa$. Then the decoding rate of AF at the $k$th receiver, denoted as $P_{AF}^k$, can be computed as,
\begin{eqnarray}\label{eq:PrAF}
P_{AF}^k&=&
%\Pr\left\{ \frac{K_rP_sG_1^kG_2^k}{\sigma_{d,k}^2}\geq \kappa \right\} \nonumber \\
%&=&
\Pr\left\{ \frac{P_r}{G_1^kP_s+\sigma_{r,k}^2}\frac{P_sG_1^kG_2^k}{\sigma_{d,k}^2}\geq \kappa \right\} 
%\nonumber \\
%&=&\hspace{-0.03in} 
= \int_0^{+\infty} \hspace{-0.03in} \bar{F}_{G_2^k}\left( \frac{(P_sx+\sigma_{r,k}^2)\sigma_{d,k}^2\kappa}{P_s P_rx} \right) dF_{G_1^k}(x). \nonumber \\
\end{eqnarray}
%where $K_r$ is the power gain at the relay node.

%----------------------------------------------------------
\paragraph{Direct Link Transmission}

For comparison purpose, we also consider the case of direct link transmission (DL). That is, the CR transmitter transmits to the receiver via the direct link; the CR relay is not used in this case.  Let the path gain be $G_0^k$ with CCDF $\bar{F}_{G_0^k}(x)$, and recall that the noise power is $\sigma_{d,k}^2$ at the receiver, for the $k$th direct link transmission.  

Following similar analysis, the decoding rate of DL at the $k$th receiver, denoted as $P_{DL}^k$, can be computed as
\begin{eqnarray} 
P_{DL}^k &=& \Pr\left\{ P_sG_0^k / \sigma_{d,k}^2 \geq \kappa \right\} 
%\nonumber \\
%         &=& 
= \bar{F}_{G_0^k} \left( \sigma_{d,k}^2\kappa / P_s \right). \label{eq:PrDL}
\end{eqnarray}

%----------------------------------------------------------
\subsubsection{Opportunistic Channel Access \label{subsec:channaccess}}

We assume greedy transmitters that always have data to send. The CR nodes
use $p$-Persistent CSMA
for channel access.  At the beginning of the transmission phase of an odd time slot, CR transmitters send Request-to-Send (RTS) with probability $p$ over the control channel. 
Since there are $N$ CR transmitters, the transmission probability $p$ is set to $1/N$ to maximize the throughput (i.e., to maximize $P_1$ in (\ref{eq:PrCSMA}) given below).  

The following three cases may occur: 
\begin{itemize}
  \item {\em Case 1}: none of the CR transmitters sends RTS for channel access. The idle licensed channels will be wasted.  
  \item {\em Case 2}: only one CR transmitter sends RTS, and it successfully receives Clear-to-Send (CTS) from the receiver over the control channel. It then accesses some of or all the licensed channels that are believed to be idle for data transmission in the transmission phase. 
  \item {\em Case 3}: more than one CR transmitters send RTS and collision occurs on the control channel. No CR node can access the licensed channels, and the idle licensed channels will be wasted. 
\end{itemize}

Let $P_0$, $P_1$ and $P_2$ denote the probability corresponding to the three cases enumerated above, respectively.  We then have
\begin{eqnarray}\label{eq:PrCSMA}
%\left\{\begin{array}{l}
P_0 &=& (1-p)^N \;=\; (1 - 1/N)^N \\
P_1 &=& Np(1-p)^{N-1} \;=\; (1 - 1/N)^{N-1} \\
P_2 &=& 1-P_0-P_1.
%\end{array}\right.
\end{eqnarray}      
The CR cooperative relay link that wins the channels in the odd time slot will continue to use the channels in the following even time slot.  A new round of channel competition will start in the next odd time slot following these two time slots. 

Since a licensed channel is accessed with probability $P_1$ in the odd time slot, we modify the 
tolerance threshold $\gamma_m$ as $\gamma'_m=\gamma_m/P_1$, such that the maximum allowable collision requirement can still be satisfied. 
In the even time slot, the channels will continue to be used by the winning cooperative relay link, i.e., to be accessed with probability 1. Therefore, the tolerance threshold is still $\gamma_m$ for the even time slots.

%%%%%%%%%%%%%%%%%%%%%%%%%%%%%%%%%%%%%%%%%%%%%%%%%%%%%%%%%%%%%%%%
\subsubsection{Capacity Analysis \label{subsec:cap}}

Once the CR transmitter wins the competition, as indicated by a received CTS, it begins to send data over the licensed channels that are inferred to be idle (i.e., $D_m = 0$) in the transmission phase.  We assume the {\em channel bonding and aggregation} technique is used, such that multiple channels can be used collectively by a CR node for data transmission~\cite{Corderio06, Su08}. 

With DF, the winning CR transmitter uses all the available channels to transmit to the relay in the odd time slot. In the following even time slot, the CR transmitter stops transmission, while the relay uses the available channels in the even time slot to forward data to the receiver. If the number of available channels in the even time slot is equal to or greater than that in the odd time slot, the relay uses the same number of channels to forward all the received data. Otherwise, the relay uses all the available channels to forward part of the received data; the 
excess data will be dropped due to limited channel resource in the even time slot. The dropped data will be retransmitted in some future odd time slot by the transmitter. 

With AF, no matter it is an odd or even time slot, the CR transmitter always uses half of the available licensed channels to transmit to the relay.  The relay uses one of its transceivers to receive from the chosen half of the available channels. Simultaneously, it uses the other transceiver to forward the received data to the receiver using the remaining half of the available channels. 

Let $D_m^{od}$ and $D_m^{ev}$ be the decision variables of channel $m$ in the odd and even time slot, respectively (see (\ref{eq:DefDeci})).  Let $S_m^{od}$ and $S_m^{ev}$ be the status of channel $m$ in the odd and even time slot, respectively.  We have,
\begin{eqnarray}
&& \hspace{-0.1in} \Pr\{D_m^{od}=i,S_m^{od}=j,D_m^{ev}=k,S_m^{ev}=l\} \label{eq:SDODEV} \\
%&=&\Pr\{D_m^{od}=i,D_m^{ev}=k,S_m^{ev}=l|S_m^{od}=j\} \Pr\{S_m^{od}=j\} \nonumber \\
%&=&\Pr\{D_m^{od}=i,D_m^{ev}=k|S_m^{od}=j,S_m^{ev}=l\} \nonumber \times \\
%&& \Pr\{S_m^{ev}=l|S_m^{od}=j\} \Pr\{S_m^{od}=j\} \nonumber \\
%&=&\Pr\{D_m^{ev}=k|D_m^{od}=i,S_m^{od}=j,S_m^{ev}=l\} \nonumber \times \\
%&& \Pr\{D_m^{od}=i|S_m^{od}=j,S_m^{ev}=l\} \nonumber \\
%&& \times \Pr\{S_m^{ev}=l|S_m^{od}=j\} \Pr\{S_m^{od}=j\} \nonumber \\
&&\hspace{-0.25in} = \Pr\{D_m^{ev}=k|S_m^{ev}=l\} \Pr\{D_m^{od}=i|S_m^{od}=j\} \times \nonumber \\ 
&& \hspace{-0.1in} \Pr\{S_m^{ev}=l|S_m^{od}=j\} \Pr\{S_m^{od}=j\}, \mbox{ for } i,j,k,l\in \{0,1\}. \nonumber
\end{eqnarray}      
%where $i,j,k,l\in \{0,1\}$. 
where $\Pr\{S_m^{od}=j\}$ are the probabilities that channel $m$ is busy or idle, $\Pr\{S_m^{ev}=l\:|\:S_m^{od}=j\}$ are the channel $m$ transition probabilities.  $\Pr\{D_m^{ev}=k\:|\:S_m^{ev}=l\}$ and $\Pr\{D_m^{od}=i\:|\:S_m^{od}=j\}$ can be computed as in (\ref{eq:PrD0H1}) and (\ref{eq:PrD0H0}).

Let $N_{DF}$, $N_{AF}$ and $N_{DL}$ be the number of frames successfully delivered to the receiver in the two consecutive time slots using DF, AF and DL, respectively. Define  $\bar{S}_m^{od}=1-S_m^{od}$, $\bar{S}_m^{ev}=1-S_m^{ev}$, $\bar{D}_m^{od}=1-D_m^{od}$ and $\bar{D}_m^{ev}=1-D_m^{ev}$. We have
\begin{eqnarray} %\label{eq:dfafdl}
%\left\{ \begin{array}{l}
N_{DF} &=& \left( \mbox{$\sum_{m=1}^M$} \bar{S}_m^{od}\bar{D}_m^{od} \right) \wedge \left( \mbox{$\sum_{m=1}^M$} \bar{S}_m^{ev}\bar{D}_m^{ev} \right) \label{eq:DF} \\
N_{AF} &=& \left\lfloor \frac{1}{2}\mbox{$\sum_{m=1}^M$} \bar{S}_m^{od}\bar{D}_m^{od}\right\rfloor + \left\lfloor\frac{1}{2}\mbox{$\sum_{m=1}^M$} \bar{S}_m^{ev}\bar{D}_m^{ev}\right\rfloor \label{eq:AF} \\
N_{DL} &=& \left( \mbox{$\sum_{m=1}^M$} \bar{S}_m^{od}\bar{D}_m^{od} \right) + \left( \mbox{$\sum_{m=1}^M$} \bar{S}_m^{ev}\bar{D}_m^{ev} \right), \label{eq:DL}
%   \end{array} \right. 
\end{eqnarray}
where $x\wedge y$ represents the minimum of $x$ and $y$, and $\left\lfloor x \right\rfloor$ means the maximum integer that is not larger than $x$.

As discussed, the probability that a frame can be successfully delivered 
is $P_{DF}^k$, $P_{AF}^k$, or $P_{DL}^k$ for the three schemes, respectively. 
Recall that spectrum resources are allocated distributedly for every pair of two consecutive time slots.  We derive the capacity for the three cooperative relay strategies as
% $DF$, $AF$ and $DL$ schemes,
%\begin{eqnarray}
%C_{DF}&=& \mbox{E}\left[ N_{DF} \right] \cdot  \mbox{$\sum_{k=1}^N$} \frac{P_{DF}^kP_1L}{2NT_s} \label{eq:ThDF} \\
%C_{AF}&=& \mbox{E}\left[ N_{AF} \right]  \cdot  \mbox{$\sum_{k=1}^N$} \frac{P_{AF}^kP_1L}{2NT_s}, \label{eq:ThAF}\\
%C_{DL}&=& \mbox{E}\left[ N_{DL} \right] \cdot   \mbox{$\sum_{k=1}^N$} \frac{P_{DL}^kP_1L}{2NT_s} \label{eq:ThDL}
%C_{DF}&=& \mbox{E}\left[ N_{DF} \right] \cdot \mbox{$\sum_{k=1}^N$} (P_{DF}^kP_1L)/(2NT_s) \label{eq:ThDF} \\
%C_{AF}&=& \mbox{E}\left[ N_{AF} \right] \cdot \mbox{$\sum_{k=1}^N$} (P_{AF}^kP_1L)/(2NT_s) \label{eq:ThAF}\\
%C_{DL}&=& \mbox{E}\left[ N_{DL} \right] \cdot \mbox{$\sum_{k=1}^N$} (P_{DL}^kP_1L)/(2NT_s), %\label{eq:ThDL}
%\end{eqnarray}
%
\begin{eqnarray} \label{eq:Cdfafdl}
%\left\{ \begin{array}{l} 
  C_{DF} &=& \mbox{E}\left[ N_{DF} \right] \cdot \mbox{$\sum_{k=1}^N$} (P_{DF}^kP_1L)/(2NT_s) \\ 
  C_{AF} &=& \mbox{E}\left[ N_{AF} \right] \cdot \mbox{$\sum_{k=1}^N$} (P_{AF}^kP_1L)/(2NT_s) \\
  C_{DL} &=& \mbox{E}\left[ N_{DL} \right] \cdot \mbox{$\sum_{k=1}^N$} (P_{DL}^kP_1L)/(2NT_s),   
%\end{array} \right. 
\end{eqnarray}
where $L$ is the packet length and $T_s$ is the duration of a time slot.  The expectations are computed using the results derived in (\ref{eq:SDODEV}) $\sim$ 
%(\ref{eq:dfafdl}). 
(\ref{eq:DL}).

%------------------------------------------------------------------------
\subsection{Performance Evaluation \label{sec:sim_relay}}

We evaluate the performance of the cooperative relay strategies with analysis and simulations.  The analytical capacities of the schemes are obtained with the analysis presented in Section~\ref{sec:ana_relay}.  The actual throughput is obtained using MATLAB simulations.  The simulation parameters and their values are listed in Table~\ref{tb:SimPara}, unless specified otherwise. We consider $M=5$ licensed channels and a CR network with seven cooperative relay links.  The channels have identical parameters for the Markov chain models. 
Each point in the simulation curves is the average of $10$ simulation runs with different random seeds.  We plot $95\%$ confidence intervals for the simulation results, 
which are negligible in all the cases. 

\begin{table}[!t]
\begin{center}
\caption{Simulation Parameters and Values}
\begin{tabular}{lll}
\hline
{\em Symbol} & {\em Value} & {\em Definition} \\
\hline
$M$ & 5 & number of licensed channels \\
$\lambda$ & 0.7 & channel transition probability \\
       & & from idle to idle \\
$\mu$ & 0.2 & channel transition probability \\
       & & from busy to idle \\
$\eta$ & 0.6 & channel utilization \\
$\gamma$ & 0.08 & maximum allowable collision \\
       & & probability \\
$N$ & 7 & number of CR cooperative relay \\
       & & links \\
$P_s$ & $10$ dBm & transmit power of the CR \\
       & & transmitters \\
$P_r$ & $10$ dBm & transmit power of CR relays \\
$L$ & $1$ kb & packet length \\
$T_s$ & $1$ ms & duration of a time slot\\
\hline
\end{tabular}
\label{tb:SimPara}
\end{center}
%\vspace{-0.2in}
\end{table}

We first examine the impact of the number of licensed channels. 
To illustrate the effect of spectrum sensing, we let the decoding rate $P_{AF}^k$ be equal to $P_{DF}^k$. In Fig.~\ref{fig:ChanNum}, we plot the throughput of AF, DF, and DL under increased number of licensed channels. 
%under different numbers 
%with the number of licensed channels increased from $3$ to $9$. 
The analytical curves are upper bounds for the simulation curves in all the cases, and the gap between the two is reasonably small.  Furthermore, as the number of license channels is increased, the throughput of both AF and DF are increased.  The slope of the AF curves is larger than that of the DF curves.  There is a cross point between five and six, as predicted by both simulation and analysis curves.  This indicates that AF outperforms DF when the number of channels is large.  This is because AF is more flexible than DF in exploiting the idle channels in the two consecutive time slots.  
%the performance of AF is better than that of DF when the number of channels is large. It makes sense because AF scheme is more flexible than DF in two successive time slots. The two capacity curves are conform to the simulation results. Both AF and DF have superior throughput performance over direct link transmission.
The DL analysis and simulation curves also increases with the number of channels, but with the lowest slope and the lowest throughput values.

\begin{figure}[!t]
\centering
\includegraphics[width=4.5in, height=3.0in]{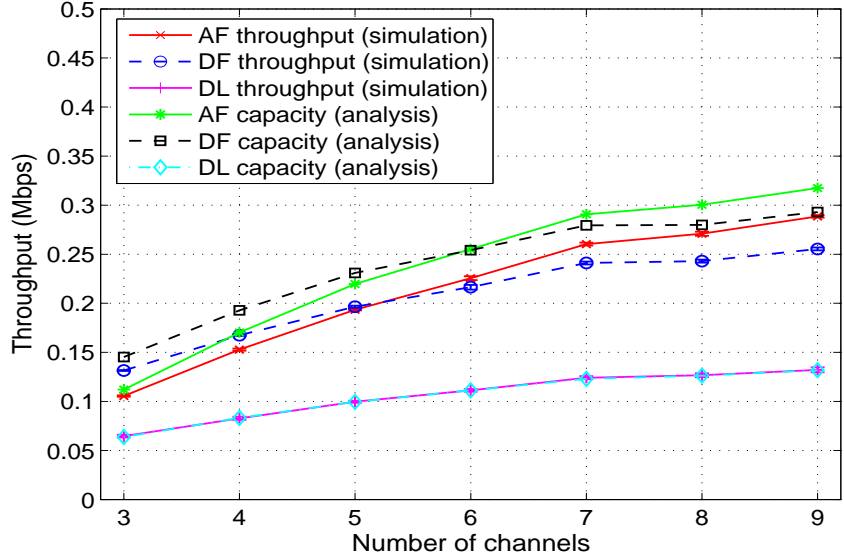}
\caption{Throughput performance versus number of licensed channels.}
\label{fig:ChanNum}
\end{figure}

In Fig.~\ref{fig:ChanUti}, we demonstrate the impact of channel utilization on the throughput of the schemes.  The channel utilization $\eta$ is increased from $0.3$ to $0.9$, when primary users get more active.  As $\eta$ is increased, the transmission opportunities for CR nodes are reduced and all the 
throughputs are degraded.  We find the throughputs of AF and DF are close to each other when the channel utilization is high.  AF outperforms DF in the low channel utilization region, but is inferior to DF in the high channel utilization region.  There is a cross point between the AF and DF curves between $\eta=0.5$ and $\eta=0.6$.  
When the channel utilization is low, there is a big gap between the cooperative relay curves and and the DL curves. 

\begin{figure}[!t]
\centering
\includegraphics[width=4.5in, height=3.0in]{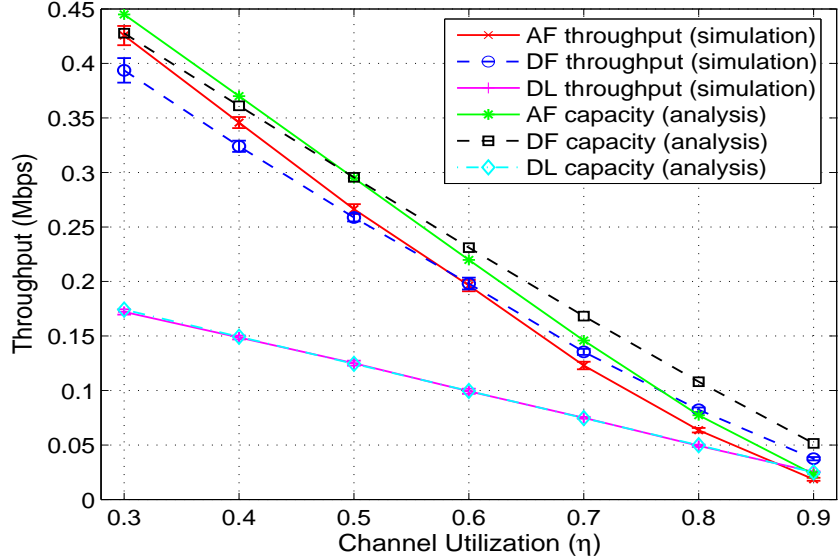}
\caption{Throughput performance versus primary user channel utilization.}
\label{fig:ChanUti}
\end{figure}

In Fig.~\ref{fig:ChanPower}, we examine the channel fading factor. We consider Rayleigh block fading channels, where the received power is exponentially distributed with a distance-dependent mean. 
We fix the transmitter power at 10 dBm, and increase the relay power from one dBm to 18 dBm. As the relay power is increased, the throughput is also increased since the SNR at the receiver is improved. We can see the increasing speed of AF is larger than that of DF, indicating that AF has superior performance than DF when the relay transmit power is large. The capacity analysis also demonstrate the same trend.  The throughput of DL does not depend on the relay node. Its throughput is better than that of AF and DF when the relay transmit power is low, since both AF and DF are limited by the relay-to-receiver link in this low power region.  However, the throughputs of AF and DF quickly exceed that of DL and grow fast as the relay-to-receiver link is improved with the increased relay transmit power. The considerable gaps between the cooperative relay link curves and the DL curves in Figs.~\ref{fig:ChanNum},~\ref{fig:ChanUti} and~\ref{fig:ChanPower} exemplify the diversity gain achieved by cooperative relays in CR networks. 

\begin{figure}[!t]
\centering
\includegraphics[width=4.5in, height=3.0in]{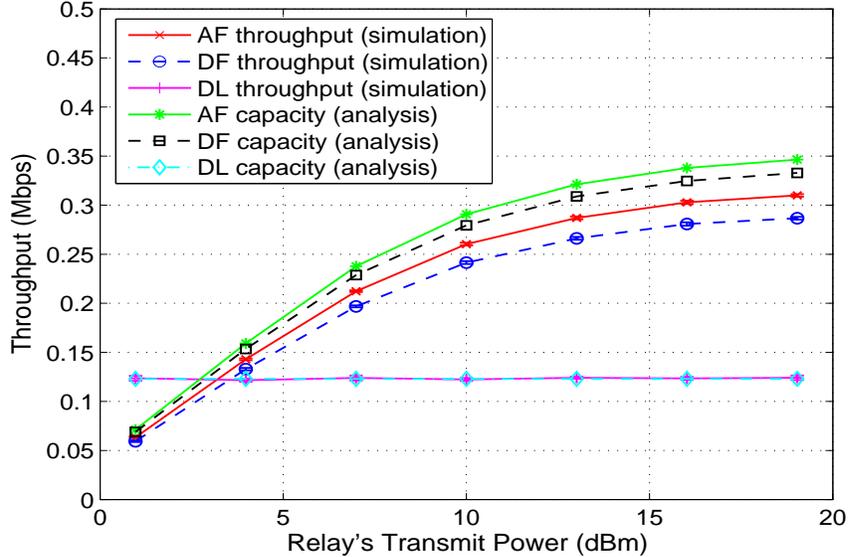}
\caption{Throughput performance versus transmit power of relay nodes.}
\label{fig:ChanPower}
\end{figure}

\section{Cooperative CR Networks with Interference Alignment}\label{sec:cr_video_coop}

In this section,  we investigate cooperative relay in CR networks using video as a reference application. We consider a base station (BS) and multiple relay nodes (RN) that collaboratively stream multiple videos to CR users within the network. It has been shown that the performance of a cooperative relay link is mainly limited by two factors: 
\begin{itemize}
	\item the {\em half-duplex operation}, since the BS--RN and the RN--user transmissions cannot be scheduled simultaneously on the same channel~\cite{Sendonaris03a}; and 
	\item the {\em bottleneck channel}, which is usually the BS--user and/or the RN--user channel, usually with poor quality due to obstacles, attenuation, multipath propagation and mobility~\cite{Laneman04}. 
\end{itemize}

To support high quality video service in such a challenging environment, we assume a well planned relay network where the RNs are connected to the BS with high-speed wireline links, and explore interference alignment to overcome the bottleneck channel problem~\cite{Hu12IC}. Therefore the video packets will be available at both the BS and the RNs before their scheduled transmission time, thus allowing advanced cooperative transmission techniques (e.g. interference alignment) to be adopted for streaming videos. 
In particular, we incorporate interference alignment to allow transmitters collaboratively send encoded signals to all CR users, such that undesired signals will be canceled and the desired signal can be decoded at each CR user. 

We present a stochastic programming formulation, as well as a reformulation that greatly reduces computational complexity. In the cases of a single licensed channel and multiple licensed channels with channel bonding, we develop an optimal distributed algorithm with proven convergence and convergence speed. In the case of multiple channels without channel bonding, we develop a greedy algorithm with a proven performance bound.

%-------------------------------------------------------------------
\subsection{Network Model and Assumptions}
%-------------------------------------------------------------------
The cooperative CR network is illustrated in Fig.~\ref{fig:netmod}. There is a CR BS (indexed $1$) and $(K-1)$ CR RNs (indexed from $2$ to $K$) deployed in the area to serve $N$ active CR users. Let $\mathcal{U} = \{1, 2, \cdots, N\}$ denote the set of active CR users. We assume that the BS and all the RNs are equipped with multiple transceivers: one is tuned to the common control channel and the others are used to sense multiple licensed channels at the beginning of each time slot, and to transmit encoded signals to CR users. 
We consider the case where each CR user has one software defined radio (SDR) based transceiver, 
which can be tuned to operate on any of the $(M+1)$ channels. If the channel bonding/aggregation techniques are used~\cite{Mahmoud09, Corderio06}, a transmitter can collectively use all the available channels and a CR user can receive from all the available channels simultaneously. Otherwise, only one licensed channel will be used by a transmitter and a CR user can only receive from a single chosen channel at a time.  

%-------------------------------------------------------------------
\begin{figure} [!t]
 \centering
 \includegraphics [width=4.5in]{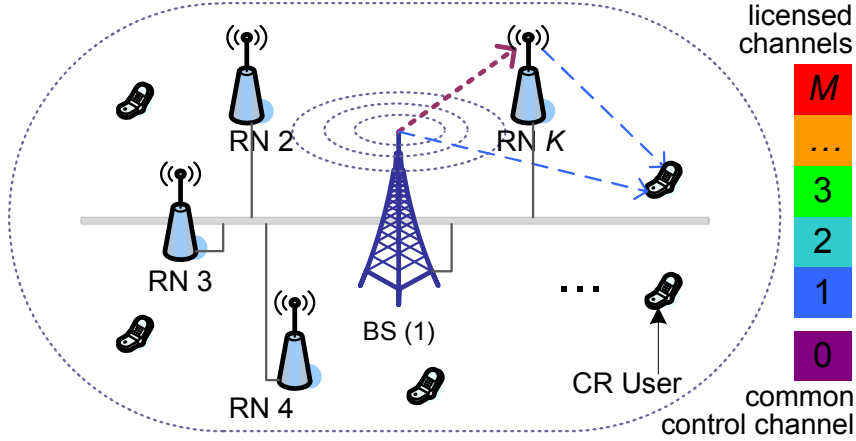}
 \caption{Illustration of the cooperative CR network.}
 \label{fig:netmod}
\end{figure}
%-------------------------------------------------------------------

Consider the three channels in a traditional cooperative relay link. Usually the BS and RNs are mounted on high towers, and the BS--RN channel has good quality due to line-of-sight (LOS) communications and absence of mobility.  On the other hand, a CR user is typically on the ground level. The BS--user and RN--user channels usually have much poorer quality due to obstacles, attenuation, multipath propagation and mobility. To support high quality video service, we assume a well planned relay network, where the RNs are connected to the BS via broadband wireline connections (e.g., as in femtocell networks~\cite{Hu12JSAC}). Alternatively, free space optical links can also be used to provide multi-gigabit rates between the BS and the RNs~\cite{Son10}. As a result, the video packets will always be available for transmission (with suitable channel coding and retransmission) at the RNs at their scheduled transmission time. To cope with the much poorer BS--user and RN--user channels, the BS and RNs adopt interference alignment to cooperatively transmit video packets to CR users, while exploiting the spectrum opportunities in the licensed channels.  

%--------------------------------------------------------------
\subsubsection{Spectrum Access}
%--------------------------------------------------------------
The BS and the RNs sense the licensed channels and exchange their sensing results over the common control channel during the sensing phase. Given $L$ sensing results obtained for channel $m$, the corresponding sensing result vector is $\vec{\Theta}_L^m = [\Theta_1^m, \Theta_2^m, \cdots, \Theta_L^m]$. Let $P_m^A(\vec{\Theta}_L^m):=P_m^A(\Theta_1^m, \Theta_2^m, \cdots, \Theta_L^m)$ be the conditional probability that channel $m$ is available, which can be computed iteratively as shown in our prior work~\cite{Hu12JSAC}:
\begin{eqnarray}\label{eq:Iteration1}
P_m^A(\Theta_1^m)&=&\left[1+\frac{\eta_m}{1-\eta_m}\times\frac{(\delta_1^m)^{1-\Theta_1^m}(1-\delta_1^m)^{\Theta_1^m}}{(\epsilon_1^m)^{\Theta_1^m}(1-\epsilon_1^m)^{1-\Theta_1^m}}\right]^{-1} \nonumber \\
P_m^A(\vec{\Theta}_l^m)&:=&P_m^A(\Theta_1^m,\Theta_2^m,\cdots,\Theta_l^m) \nonumber \\
&=& \left\{1+\left[\frac{1}{P_m^A(\Theta_1^m,\Theta_2^m,\cdots,\Theta_{l-1}^m)}-1\right]\times 
%\right.\nonumber \\
%&&\left.
\frac{(\delta_l^m)^{1-\Theta_l^m}(1-\delta_l^m)^{\Theta_l^m}}{(\epsilon_l^m)^{\Theta_l^m}(1-\epsilon_l^m)^{1-\Theta_l^m}}\right\}^{-1},l\ge2. \nonumber
\end{eqnarray}

For each channel $m$, define an index variable $D_m(t)$ for the BS or RNs to access the channel in time slot $t$. That is, 
\begin{eqnarray}\label{eq:DefDm1}
  D_m(t)=\left\{\begin{array}{l l}
%      $0$, & \mbox{if channel $m$ is to be accessed in $t$} \\
      $0$, & \mbox{access channel $m$ in time slot $t$} \\
      $1$, & \mbox{otherwise,}
                \end{array}\right. 
      m = 1, 2, \cdots, M. 
\end{eqnarray}

With sensing result $P_m^A(\vec{\Theta}_L^m)$, each channel $m$ will be opportunistically accessed. Let the probability be $P_m^D(\vec{\Theta}_L^m)$ that channel $m$ will be accessed in time slot $t$ (i.e., when $D_m(t)=0$). The optimal channel access probability can be computed as: 
\begin{eqnarray}\label{eq:PrDm1}
  P^D_m(\vec{\Theta}_L^m) = \min \left\{ \gamma_m / \left[ 1 - 
             P^A_m(\vec{\Theta}_L^m) \right], 1 \right\}.
\end{eqnarray}
Let $\mathcal{A}(t)$ be the set of available channels in time slot $t$. It follows that
$\mathcal{A}(t):=\{m\;|\;D_m(t)=0\}$.

%--------------------------------------------------------------
\subsubsection{Interference Alignment \label{subsec:iac}}
%--------------------------------------------------------------

We next briefly describe the main idea of interference alignment considered in this paper. Interested readers are referred to~\cite{Gollakota09, Li10} for insightful examples, a classification of various interference alignment scenarios, and practical considerations. 
 
Consider two transmitters (denoted as $s_1$ and $s_2$ ) and two receivers (denoted as $d_1$ and $d_2$). Let $X_1$ and $X_2$ be the signals corresponding to the packets to be sent to $d_1$ and $d_2$, respectively. With interference alignment, the transmitters $s_1$ and $s_2$ send compound signals $a_{1,1}X_1+a_{1,2}X_2$ and $a_{2,1}X_1+a_{2,2}X_2$, respectively, to the two receivers $d_1$ and $d_2$ simultaneously. If channel noise is ignored, the received signals $Y_1$ and $Y_2$ can be written as:
\begin{eqnarray}\label{eq:InterAlign}
\left[\begin{array}{c}
	Y_1\\
	Y_2
\end{array}\right]&=&
\left[\begin{array}{cc}
	G_{1,1} & G_{1,2}\\
	G_{2,1} & G_{2,2}
\end{array}\right]^{\mathsf{T}}
\left[\begin{array}{cc}
	a_{1,1} & a_{1,2}\\
	a_{2,1} & a_{2,2}
\end{array}\right]
\left[\begin{array}{c}
	X_1\\
	X_2
\end{array}\right] 
%\nonumber \\
%&:=& 
:= \mathbf{G}^{\mathsf{T}} \times \mathbf{A} \times \vec{X},
\end{eqnarray} 
where $G_{i,j}$ is the channel gain from transmitter $s_i$ to receiver $d_j$. 

From (\ref{eq:InterAlign}), it can be seen that both receivers can perfectly decode their signals if the transformation matrix $\mathbf{A}$ is chosen to be $\left\{\mathbf{G}^{\mathsf{T}}\right\}^{-1}$, i.e., the inverse of the channel gain matrix. With this technique, the transmitters are able to send packets simultaneously and the interference between the two concurrent transmissions can be effectively canceled at both receivers~\cite{Li10}.

%--------------------------------------------------------------
\subsection{Problem Formulation \label{sec:form}}
%--------------------------------------------------------------

We formulate the problem of interference alignment for scalable video streaming over cooperative CR networks in this section. As discussed in Section~\ref{sec:mod}, the video packets are available at both the BS and all the RNs before their scheduled transmission time; the BS and RNs adopt interference alignment to overcome the poor BS--CR user and RN--CR user channels. 

Let $X_j$ denote the signal to be transmitted to user $j$, which has unit power. As illustrated in Section~\ref{subsec:iac}, 
transmitter $k$ sends a compound signal $\sum_{j\in \mathcal{U}}a_{k,j}X_j$ to all active CR users, where $a_{k,j}$'s are the weights to be determined. Ignoring channel noise, we can compute the received signal $Y_n$ at a user $n$ as:  
\begin{eqnarray}\label{eq:RecvSig}
  Y_n&=&\sum_{k=1}^K G_{k,n} \sum_{j=1}^N a_{k,j}X_j  %\nonumber\\
  %&=&
  =\sum_{k=1}^K \sum_{j=1}^N  a_{k,j} G_{k,n}X_j  
  \nonumber \\
  &=&
  %= 
  \sum_{j=1}^N X_j\sum_{k=1}^K a_{k,j} G_{k,n}, \; n=1,2,\cdots,N,
\end{eqnarray}
where $G_{k,n}$ is the channel gain from the BS (i.e., $k=1$) or an RN $k$ to user $n$.  For user $n$, only signal $X_n$ should be decoded and the coefficients of all other signals should be forced to zero. The {\em zero-forcing constraints} can be written as:
\begin{eqnarray}\label{eq:OrthSig}
  \sum_{k=1}^K a_{k,j} G_{k,n}=0,  \;\;\;\mbox{for all } j \neq n. 
\end{eqnarray}

Usually the total transmit power of the BS and every RN is limited by a peak power $P_{max}$. Since $X_j$ has unit power, for all $j$, the power of each transmitted signal is the square sum of all the coefficients $a_{k,j}^2$. The {\em peak power constraint} can be written as
\begin{eqnarray}\label{eq:MaxPow}
  \sum_{j=1}^N |a_{k,j}|^2 \le P_{max}, \;\;\; k=1, \cdots, K.
\end{eqnarray}

Recall that each CR user has one SDR transceiver that can be tuned to receive from any of the $(M+1)$ channels, when channel bonding is not adopted. Let $b_j^m$ be a binary variable indicating that user $j$ selects licensed channel $m$. It is defined as
\begin{eqnarray}\label{eq:Defbnm}
  b_j^m=\left\{\begin{array}{l l}
        $1$, & \mbox{if user $n$ receives from channel $m$}\\
        $0$, & \mbox{otherwise,} 
               \end{array}\right. 
 %\nonumber \\
        j=1,\cdots,N, \; m=1,\cdots,M.  
\end{eqnarray}
Then, we have the following {\em transceiver constraint}:
\begin{eqnarray}\label{eq:OneChan2}
  \sum_{m\in \mathcal{A}(t)} b_j^m \leq 1, \;\;\; j=1,\cdots,N.  
\end{eqnarray} 
After introducing the channel selection variables $b_j^m$'s, the overall channel gain becomes
\begin{eqnarray}\label{eq:OrthSig2}
  G_{k,j}= \sum_{m\in \mathcal{A}(t)}b_j^m H_{k,j}^m,  
\end{eqnarray}
where $H_{k,j}^m$ is the channel gain from the BS (i.e., $k=1$) or an RN $k$ to user $j$ on channel $m$.

Let $w_j^t$ be the PSNR of user $j$'s reconstructed video at the beginning of time slot $t$ and $W_j^t$ the PSNR of user $j$'s reconstructed video at the end of time slot $t$. In time slot $t$, $w_j^t$ is already known, while $W_j^t$ is a random variable depending on the resource allocation and primary user activity during the time slot. That is, $w_j^{t+1}$ is a realization of $W_j^t$. 

The quality of reconstructed MGS video can be modeled with a linear equation~\cite{Wien07}:
\begin{equation}\label{eq:RateDist}
  W(R)=\alpha + \beta \times R,
\end{equation}
where $W(R)$ is the average peak signal-to-noise ratio (PSNR) of the reconstructed MGS video, $R$ is the average data rate, and $\alpha$ and $\beta$ are constants depending on the specific video sequence and codec. 

%This relationship is validated in our simulations with several video test sequences and the H.264/SVC codec. As shown in Fig.~\ref{fig:mgsrd}, the simulated points match very well with the model-predicted curves. 
%
%%-----------------------------------------------
%\begin{figure} %[!t]
% \centering
% \includegraphics [width=4in, height=2.7in]{cr_video_coop/MGS-RD.eps}
% \caption{Validate the MGS video PSNR-rate model given in (\ref{eq:RateDist}) using three CIF video sequences {\em Bus}, {\em Mobile}, and {\em Harbour}.}
% \label{fig:mgsrd}
%\end{figure}
%%-----------------------------------------------

We formulate a multistage stochastic programming problem to maximize the sum of expected logarithm of the PSNR's at the end of the GOP, i.e., $\sum_{j=1}^N \mathbb{E}\left[\log(W_j^T)\right]$, for proportional fairness among the video sessions~\cite{Kelly98}. 
It can be shown that the multistage stochastic programming problem can be decomposed into $T$ serial sub-problems, one for each time slot $t$, as~\cite{Hu10TW}: 
\begin{eqnarray} \label{eq:ObjFun1}
\mbox{maximize:  } && \hspace{-0.2in} \sum_{j=1}^N \mathbb{E} \left[\log(W_j^t)|w_j^t \right] \\
\mbox{subject to:} && \hspace{-0.2in} W_j^t=w_j^t+\Psi_j^t \\
                   && \hspace{-0.2in} b_j^m \in \{0,1\}, \; a_{k,j} \geq 0, \; \mbox{for all } m, j, k  
                      \label{eq:ba} \\
                   && \hspace{-0.2in} \mbox{Constraints~(\ref{eq:OrthSig}),~(\ref{eq:MaxPow}) and~(\ref{eq:OneChan2})}, \nonumber
\end{eqnarray} 
where $\Psi_j^t$ is a random variable that depends on spectrum sensing, power allocation, and channel selection in time slot $t$. This is a mixed integer nonlinear programming problem (MINLP), with binary variables $b_j^m$'s and continuous real variables $a_{k,j}$'s. 

In particular, $\Psi_j^t$ can have two possible values: (i) zero, if the packet is not successfully received due to collision with primary users; (ii) the PSNR increase achieved in time slot $t$ if the packet is successfully received, denoted as $\lambda_j^t$. The PSNR increase can be computed as:
\begin{eqnarray}\label{eq:lambda}
  \lambda_j^t = \frac{\beta_j B}{T} \log_2 \left( 1 + \frac{1}{N_0} \left( \sum_{k=1}^K 
                a_{k,j}G_{k,j} \right)^2 \right),
\end{eqnarray}
where $N_0$ is the noise power and $B$ is the channel bandwidth.  

User $j$ can successfully receive a video packet from channel $m$ if it tunes to channel $m$ (i.e., $b_j^m=1$) and the BS and RNs transmit on channel $m$ (i.e., with probability $P^D_m(\vec{\Theta}_L^m)$). The probability that user $j$ successfully receives a video packet, denoted as $P_j^t$, is
\begin{eqnarray}\label{eq:Pjt}
  P_j^t = \sum_{m\in \mathcal{A}(t)}b_j^m P^D_m(\vec{\Theta}_L^m).
\end{eqnarray} 
Therefore, we can expand the expectation in (\ref{eq:ObjFun1}) to obtain a reformulated problem:
\begin{eqnarray} \label{eq:ExpTerm}
\hspace{-0.2in} \mbox{maximize:  } && \hspace{-0.2in} \sum_{j=1}^N \mathbb{E} \left[ P_j^t \log(w_j^t+\lambda_j^t)+(1-P_j^t)\log(w_j^t) \right] \\
\hspace{-0.3in} \mbox{subject to:} && \hspace{-0.2in}  \mbox{constraints~(\ref{eq:OrthSig}),~(\ref{eq:MaxPow}),~(\ref{eq:OneChan2}), 
     and~(\ref{eq:ba})}. \nonumber
\end{eqnarray}

%--------------------------------------------------------------
\subsection{Solution Algorithms \label{sec:sol}}
%--------------------------------------------------------------

In this section, we develop effective solution algorithms to the stochastic programming problem~(\ref{eq:ObjFun1}). In Section~\ref{subsec:onec}, we first consider the case of a single licensed channel, and derive a distributed, optimal algorithm with guaranteed convergence and bounded convergence speed. We then address the case of multiple licensed channels. If channel bonding/aggregation techniques are used~\cite{Mahmoud09, Corderio06}, the distributed algorithm in Section~\ref{subsec:onec} can still be applied to achieve optimal solutions. We finally consider the case of multiple licensed channels without channel bonding, and develop a greedy algorithm with a performance lower bound in Section~\ref{subsec:mcnocbd}.

%--------------------------------------------------------------
\subsubsection{Case of a Single Channel \label{subsec:onec}}
%--------------------------------------------------------------

%--------------------------------------------------------------
\paragraph{Property \label{subsubsec:propty}}
%--------------------------------------------------------------
%We first consider the case of a single channel ($M=1$) in the CR network where $K$ transmitters (one BS and $(K-1)$ RNs ) can use one licensed channel to stream videos to $K$ active CR users when the channel is available. 
Consider the case when there is only one licensed channel, i.e., when $M=1$. The $K$ transmitters, including the BS and $(K-1)$ RNs, send video packets to active users using the licensed channel when it is sensed idle. 

\begin{definition} A set of vectors is {\em linearly independent} if none of them can be written as a linear combination of the other vectors in the set~\cite{Strang09}.
\end{definition}

For user $j$, the weight and channel gain vectors are: $\vec{a}_j=[a_{1,j},a_{2,j},\cdots,a_{K,j}]^\mathsf{T}$ and $\vec{G}_j=[G_{1,j},G_{2,j},\cdots,G_{K,j}]^\mathsf{T}$, where $\mathsf{T}$ denotes {\em matrix transpose}. 
%Without loss of generality, 
Due to spatial diversity, we assume that the $\vec{G}_j$ vectors are linearly independent~\cite{Tse05}. 

\begin{lemma} \label{lemma1:1}
%For the interference alignment and cancellation technique to work, 
To successfully decode each signal $X_j$, $j=1, 2, \cdots, N$, the number of active users $N$ should be smaller than or equal to the number of transmitters $K$.
\end{lemma}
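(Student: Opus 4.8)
The plan is to argue by a dimension/rank count on the zero-forcing constraints~(\ref{eq:OrthSig}). Fix a user $n$. The zero-forcing requirement says that the compound coefficient $\sum_{k=1}^K a_{k,j} G_{k,n}$ must vanish for every $j \neq n$, while for $j=n$ it must be nonzero (otherwise $X_n$ cannot be decoded at user $n$). Rewriting this in vector form using the channel gain vectors $\vec{G}_n = [G_{1,n},\dots,G_{K,n}]^{\mathsf{T}}$ and the weight vectors $\vec{a}_j=[a_{1,j},\dots,a_{K,j}]^{\mathsf{T}}$, the constraints for user $n$ read $\vec{G}_n^{\mathsf{T}} \vec{a}_j = 0$ for all $j \neq n$ and $\vec{G}_n^{\mathsf{T}} \vec{a}_n \neq 0$. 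Collecting the constraints over all users, the weight vector $\vec{a}_n$ assigned to signal $X_n$ must be orthogonal to the $(N-1)$ vectors $\{\vec{G}_j : j \neq n\}$ while not being orthogonal to $\vec{G}_n$.

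First I would observe that, since the $\vec{G}_j$ vectors are assumed linearly independent (spatial diversity assumption), the set $\{\vec{G}_j : j \neq n\}$ spans an $(N-1)$-dimensional subspace of $\mathbb{R}^K$. For a nonzero $\vec{a}_n$ to exist that is orthogonal to all of these, the orthogonal complement of $\mathrm{span}\{\vec{G}_j : j\neq n\}$ in $\mathbb{R}^K$ must be nontrivial, which requires $K - (N-1) \geq 1$, i.e.\ $K \geq N$. Next I would check that this necessary condition is not vacuous: when $K \geq N$, the orthogonal complement has dimension $K-N+1 \geq 1$, and because $\vec{G}_n$ is linearly independent of $\{\vec{G}_j : j \neq n\}$, it does not lie in the span of those vectors, so a generic vector in the orthogonal complement is not orthogonal to $\vec{G}_n$; hence a valid $\vec{a}_n$ with $\vec{G}_n^{\mathsf{T}}\vec{a}_n \neq 0$ can indeed be chosen. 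Doing this for every $n$ shows $K \geq N$ is also sufficient for the zero-forcing constraints to be simultaneously satisfiable, which confirms the bound is tight.

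The main obstacle — really the only subtle point — is making precise the claim that a nonzero weight vector \emph{must} exist for each signal: one has to rule out the degenerate ``solution'' in which $\vec{a}_n = \vec{0}$, since the zero-forcing equations~(\ref{eq:OrthSig}) are trivially satisfied by the all-zero vector but then $X_n$ carries no power and cannot be decoded. This is handled by invoking the decoding requirement explicitly: successful decoding of $X_j$ at user $j$ forces $\sum_k a_{k,j} G_{k,j} \neq 0$, hence $\vec{a}_j \neq \vec{0}$. Once that is pinned down, the result is just the statement that an $(N-1)$-dimensional orthogonality system in $\mathbb{R}^K$ admits a nonzero solution only if $N-1 < K$, i.e.\ $N \leq K$. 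I would also note in passing that the peak power constraint~(\ref{eq:MaxPow}) plays no role in this particular lemma — any nonzero $\vec{a}_n$ can be scaled down to meet it — so it can be safely ignored here.
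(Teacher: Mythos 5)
Your argument is correct and is essentially the paper's own proof: both rest on the observation that $\vec{a}_j$ must be orthogonal to the $(N-1)$ linearly independent vectors $\{\vec{G}_n : n \neq j\}$ in $\mathbb{R}^K$, which forces $N-1 \le K-1$. Your version is slightly more careful in that it explicitly rules out the degenerate solution $\vec{a}_j = \vec{0}$ via the decoding requirement $\vec{G}_j^{\mathsf{T}}\vec{a}_j \neq 0$ and verifies sufficiency of $K \ge N$, both of which the paper leaves implicit, but the core dimension-counting idea is the same.
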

%not be greater than the number of transmitters $K$.
%
\begin{proof}
From (\ref{eq:OrthSig}), it can be seen that $\vec{a}_j$ 
%$\mathbf{a}_j$ 
is orthogonal to the $(N-1)$ vectors $\vec{G}_n$'s, for $n \neq j$.
%when $n$ is not equal to $j$. 
Since $\vec{a}_j$ is a $K$ by $1$ vector, there are at most $(K-1)$ vectors that are orthogonal to $\vec{a}_j$. Since the $\vec{G}_j$ vectors are linearly independent, it follows that $(N-1) \le (K-1)$ and therefore $N \le K$.  
\end{proof}

According to Lemma~\ref{lemma1:1}, the following additional constraints should be enforced for the channel selection variables. 
\begin{eqnarray}\label{eq:MaxSender}
  \sum_{j=1}^N b_j^m \leq K, \;\;\;\mbox{for all } m \in \mathcal{A}(t).  
\end{eqnarray} 
That is, the number of active users receiving from any channel $m$ cannot be more than the number of transmitters on that channel, which is $K$ in the single channel case and less than or equal to $K$ in the multiple channels case.  
We first assume that $N$ is not greater than $K$, and will remove this assumption in the following subsection.

%--------------------------------------------------------------
\paragraph{Reformulation and Complexity Reduction \label{subsubsec:reform}}
%--------------------------------------------------------------
With a single channel, all active users receive from channel 1. Therefore $b_j^1=1$, and $b_j^m=0$, for $m>1$, $j=1,2,\cdots,N$. The formulated problem is now reduced to a nonlinear programming problem with constraints (\ref{eq:OrthSig}), (\ref{eq:MaxPow}), and~(\ref{eq:ba}).  
If the number of active users is $N=1$, the solution is straightforward: all the transmitters send the same signal $X_1$ to the single user using their maximum transmit power $P_{max}$. 

In general, the reduced problem can be solved with the dual decomposition technique~\cite{Bertsekas99} (i.e., a primal dual algorithm).  This problem has $K\times N$ primal variables (i.e., the $a_{k,j}$'s), and we need to define $N(N-1)$ dual variables (or, Lagrangian Multipliers) for constraints (\ref{eq:OrthSig}) and $K$ dual variables for constraints (\ref{eq:MaxPow}). These numbers could be large for even moderate-sized systems. Before presenting the solution algorithm, we first derive a reformulation of the original problem~(\ref{eq:ExpTerm}) that can greatly reduce the number of primal and dual variables, such that the computational complexity can be reduced.

\begin{lemma} Each vector $\vec{a}_j = [a_{1,j}, a_{2,j}, \cdots, a_{K,j}]^\mathsf{T}$ can be represented by the linear combination of $r$ nonzero, linearly independent vectors, where $r=K-N+1$.
\end{lemma}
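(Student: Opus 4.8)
The plan is to read the lemma as a statement about the dimension of the subspace in which $\vec{a}_j$ is forced to live by the zero-forcing constraints~(\ref{eq:OrthSig}), and then to exhibit $r$ linearly independent generators of that subspace. Fix an arbitrary user index $j \in \mathcal{U}$. Constraint~(\ref{eq:OrthSig}) states precisely that $\vec{a}_j^{\mathsf{T}} \vec{G}_n = 0$ for every $n \neq j$, i.e.\ that $\vec{a}_j$ is orthogonal to every vector in the span $\mathcal{G}_{-j} := \mathrm{span}\{\vec{G}_n : n \in \mathcal{U},\ n \neq j\}$. Equivalently, $\vec{a}_j \in \mathcal{G}_{-j}^{\perp}$, the orthogonal complement of $\mathcal{G}_{-j}$ in $\mathbb{R}^K$.

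The next step is a dimension count. Because the $N$ channel-gain vectors $\vec{G}_1, \ldots, \vec{G}_N$ are linearly independent (the spatial-diversity assumption), any $N-1$ of them are linearly independent as well; in particular the $N-1$ vectors $\{\vec{G}_n : n \neq j\}$ are, so $\dim \mathcal{G}_{-j} = N - 1$. Hence $\dim \mathcal{G}_{-j}^{\perp} = K - (N-1) = K - N + 1 = r$, where $r \ge 1$ since $N \le K$ by Lemma~\ref{lemma1:1} (equivalently, by constraint~(\ref{eq:MaxSender})).

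To finish, choose any basis $\{\vec{e}_{j,1}, \vec{e}_{j,2}, \ldots, \vec{e}_{j,r}\}$ of the $r$-dimensional subspace $\mathcal{G}_{-j}^{\perp}$; such a basis exists and consists of $r$ nonzero, linearly independent vectors. Since $\vec{a}_j \in \mathcal{G}_{-j}^{\perp}$, it can be written as $\vec{a}_j = \sum_{i=1}^{r} c_{j,i}\, \vec{e}_{j,i}$ for suitable real coefficients $c_{j,i}$, which is exactly the claimed representation. One may take, e.g., an orthonormal basis obtained by Gram--Schmidt applied to any spanning set of $\mathcal{G}_{-j}^{\perp}$, which also makes the subsequent reformulation of~(\ref{eq:ExpTerm}) in terms of the $c_{j,i}$'s clean.

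There is no genuine obstacle here --- the argument is a one-line rank/nullity observation --- so the only things to be careful about are bookkeeping: that the generators $\vec{e}_{j,i}$ are allowed to depend on $j$ (the lemma only asserts a representation for each individual $\vec{a}_j$), that the hypothesis of linear independence of the gain vectors is what guarantees $\dim \mathcal{G}_{-j} = N-1$ rather than something smaller, and that $r \ge 1$ so the representation is non-vacuous, which is guaranteed by $N \le K$. I expect the write-up to also record the payoff that motivates the lemma: replacing the $K$ unknowns in $\vec{a}_j$ by the $r = K-N+1$ unknowns $c_{j,i}$, and correspondingly shrinking the number of dual variables needed for~(\ref{eq:OrthSig}), thereby reducing the computational complexity of the primal-dual algorithm.
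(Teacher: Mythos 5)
Your proof is correct and follows essentially the same route as the paper's: both identify $\vec{a}_j$ as living in the null space of $\mathbf{G}_{-j}^{\mathsf{T}}$ (equivalently, the orthogonal complement of $\mathrm{span}\{\vec{G}_n : n \neq j\}$), use the linear independence of the $\vec{G}_n$'s to conclude that this space has dimension $r = K-(N-1)$ by rank--nullity, and then expand $\vec{a}_j$ in a basis of that space. Your added remarks that $r \ge 1$ follows from Lemma~\ref{lemma1:1} and that the basis may depend on $j$ are sound bookkeeping points that the paper leaves implicit.
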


\begin{proof} 
From (\ref{eq:OrthSig}), each vector $\vec{a}_j$ is orthogonal to $\vec{G}_i$ where $j\neq i$. Define a reduced matrix $\mathbf{G}_{-j}$ obtained by deleting $\vec{G}_j$ from $\mathbf{G}$, i.e., $\mathbf{G}_{-j}=[\vec{G}_1,\cdots, \vec{G}_{j-1},\vec{G}_{j+1},\cdots,\vec{G}_N]$. Then $\vec{a}_j$ is a solution to the homogeneous linear system $\mathbf{G}_{-j}^\mathsf{T}\vec{x}=0$. Since we assume that the $\vec{G}_i$'s are all linearly independent, the columns of $\mathbf{G}_{-j}$ are also linearly independent~\cite{Strang09}. Thus the rank of $\mathbf{G}_{-j}$ is $(N-1)$. The solution belongs to the null space of $\mathbf{G}_{-j}$. The dimension of the null space is $r=K-(N-1)$ according to the Rank-nullity Theorem~\cite{Strang09}. Therefore, each $\vec{a}_j$ can be presented by the linear combination of $r$ linearly independent vectors.
\end{proof}

Let $\mathbf{e}_j = \{\vec{e}_{j,1}, \vec{e}_{j,2}, \cdots, \vec{e}_{j,r}\}$ be a {\em basis} for the null space of $\mathbf{G}_{-j}$. There are many methods to obtain the basis, such as Gaussian Elimination. However, we show that it is not necessary to solve the homogeneous linear system $\mathbf{G}_{-j}^\mathsf{T}\vec{x}=0$  to get the basis for every different $j$ value. Therefore the computational complexity can be further reduced. 

Our algorithm for computing a basis is shown in Table~\ref{tab:GenBasis}. In Steps 1--6, we first solve the homogeneous linear system $\mathbf{G}^\mathsf{T}\vec{x}=0$ to get a basis $[\vec{v}_1,\vec{v}_2,\cdots,\vec{v}_{K-N}]$. Note that if $K$ is equal to $N$, the basis is the empty set $\emptyset$. We then set the $K-N$ basis vectors to be the first $K-N$ vectors in all the basses $\mathbf{e}_j$, $j=1,2,\cdots,N$. In Step 8, we orthogonalize each $\mathbf{G}_{-j}$ and obtain $(N-1)$ orthogonal vectors $\vec{\omega}_{j,i}$, $i=1,2,\cdots,N-1$. Finally in Step 9, we let the 
%$(K-N+1)$th 
$r$th vector $\vec{e}_{j,r}$ be orthogonal to all the $\vec{\omega}_{j,i}$'s by subtracting all the projections on each $\vec{\omega}_{j,i}$ from $\vec{G}_j$ (recall that $r=K-N+1$). The 
operation is:
\begin{eqnarray}\label{eq:LastEvec}
\vec{e}_{j,r} = \vec{e}_{j,N-K+1} = \vec{G}_j - \sum_{i=1}^{N-1} \frac{\vec{G}_j^\mathsf{T} \vec{\omega}_{j,i}} {\vec{\omega}_{j,i}^\mathsf{T} \vec{\omega}_{j,i}} \vec{\omega}_{j,i}.
\end{eqnarray}

\begin{table} [!t]
\begin{center}
\caption{Basis Computation Algorithm}
\begin{tabular}{ll}
\hline
1: & IF ($K > N$) \\
2: & $\;\;\;$ Solve homogeneous linear system $\mathbf{G}^\mathsf{T}\vec{x}=0$ and get \\
   & $\;\;\;$ basis $[\vec{v}_1,\cdots,\vec{v}_{K-N}]$; \\
3: & $\;\;\;$ FOR $i=1$ to $K-N$ \\
4: &$ \;\;\;\;\;\;$ $\vec{e}_{j,i} = \vec{v}_i$, for all $j$; \\
5: & $\;\;\;$ END FOR\\
6: & END IF\\
7: & FOR $j=1$ to $N$ \\
8: & $\;\;\;$ Orthogonalize $\mathbf{G}_{-j}$ %\\
              and get $(N-1)$ orthogonal vectors $\vec{w}_{j,i}$'s; \\
9: & $\;\;\;$ Calculate $\vec{e}_{j,r}$ as in (\ref{eq:LastEvec}); \\
10: & END FOR\\
\hline
\end{tabular}
\label{tab:GenBasis}
\end{center} 
%\vspace{-0.15in}
\end{table}

\begin{lemma} The solution space constructed by the basis $[\vec{v}_1,\vec{v}_2,\cdots,\vec{v}_{K-N}]$ is a sub-space of the solution space of $\mathbf{G}_{-j}^\mathsf{T}\vec{x}=0$ for all $j$.
\end{lemma}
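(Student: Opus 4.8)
The plan is to argue directly from the defining property of the two null spaces, with no machinery beyond what is already in the excerpt. First I would recall that the basis $[\vec{v}_1,\vec{v}_2,\cdots,\vec{v}_{K-N}]$ produced in Steps~1--6 of Table~\ref{tab:GenBasis} is, by construction, a basis for the null space of $\mathbf{G}^\mathsf{T}$; that is, every $\vec{v}_i$ satisfies $\mathbf{G}^\mathsf{T}\vec{v}_i = 0$, which is exactly the statement that $\vec{v}_i$ is orthogonal to each column $\vec{G}_1,\vec{G}_2,\cdots,\vec{G}_N$ of $\mathbf{G}$.

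Next, I would fix an arbitrary $j\in\{1,2,\cdots,N\}$ and observe that the reduced matrix $\mathbf{G}_{-j}=[\vec{G}_1,\cdots,\vec{G}_{j-1},\vec{G}_{j+1},\cdots,\vec{G}_N]$ has, as its columns, a subset of the columns of $\mathbf{G}$. Hence orthogonality of $\vec{v}_i$ to all columns of $\mathbf{G}$ immediately forces orthogonality to all columns of $\mathbf{G}_{-j}$, i.e. $\mathbf{G}_{-j}^\mathsf{T}\vec{v}_i = 0$ for every $i$. Since the null space of $\mathbf{G}_{-j}^\mathsf{T}$ is a linear subspace of $\mathbb{R}^K$ containing each $\vec{v}_i$, it contains every linear combination of the $\vec{v}_i$'s, so $\mathrm{span}\{\vec{v}_1,\cdots,\vec{v}_{K-N}\}$ is contained in it. As $j$ was arbitrary, the inclusion holds for all $j$, which is the claim.

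As a sanity check I would compare dimensions: the constructed space has dimension $K-N$ (the $\vec{v}_i$'s are linearly independent as a basis), whereas by the preceding lemma the null space of $\mathbf{G}_{-j}^\mathsf{T}$ has dimension $r=K-N+1$; thus the inclusion is proper of codimension one, the one missing direction being precisely the vector $\vec{e}_{j,r}$ adjoined in~(\ref{eq:LastEvec}). I would also mention the degenerate case $K=N$ explicitly: there the basis $[\vec{v}_1,\cdots,\vec{v}_{K-N}]$ is empty and the space it ``constructs'' is the trivial subspace $\{\vec{0}\}$, which is trivially contained in every null space.

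There is no real obstacle in this lemma; the only point requiring a word of care is the bookkeeping that ``columns of $\mathbf{G}_{-j}$'' is a sub-collection of ``columns of $\mathbf{G}$,'' so that the stronger orthogonality condition implies the weaker one rather than the reverse — and handling the $K=N$ boundary case so the statement is vacuously-but-correctly covered.
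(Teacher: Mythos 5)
Your argument is correct and is essentially the paper's own proof: the paper likewise notes that each $\vec{v}_i$, being in the null space of $\mathbf{G}^\mathsf{T}$, satisfies $\mathbf{G}_{-j}^\mathsf{T}\vec{v}_i=0$ by direct substitution, and you have simply spelled out the column-subset reasoning, the passage to the span, and the $K=N$ edge case that the paper leaves implicit.
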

\begin{proof} 
It is easy to see that each vector $\vec{v}_i$ is a solution of $\mathbf{G}_{-j}^\mathsf{T}\vec{x}=0$ by substituting $\vec{x}$ with $\vec{v}_i$, for $i=1,2,\cdots,K-N$. 
\end{proof}

\begin{lemma} The vectors $[\vec{v}_1, \vec{v}_2, \cdots, \vec{v}_{K-N}, \vec{e}_{j,r}]$ computed in Table~\ref{tab:GenBasis} is a basis of the null space of $\mathbf{G}_{-j}$.
\end{lemma}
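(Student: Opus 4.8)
The plan is to show two things: that each of the $r = K-N+1$ vectors $\vec{v}_1, \dots, \vec{v}_{K-N}, \vec{e}_{j,r}$ lies in the null space of $\mathbf{G}_{-j}^\mathsf{T}$, and that these $r$ vectors are linearly independent; since the null space has dimension exactly $r$ by the previous lemma (via the Rank--nullity Theorem), any $r$ linearly independent vectors in it form a basis. The membership of the $\vec{v}_i$'s is already established in the preceding lemma, so the first substantive step is to verify that $\vec{e}_{j,r}$ as defined in~(\ref{eq:LastEvec}) also lies in $\mathrm{null}(\mathbf{G}_{-j}^\mathsf{T})$, i.e.\ that $\vec{G}_n^\mathsf{T} \vec{e}_{j,r} = 0$ for every $n \neq j$. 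This follows from the Gram--Schmidt construction in Step~8: the orthogonalized vectors $\vec{\omega}_{j,1}, \dots, \vec{\omega}_{j,N-1}$ span the same space as the columns $\vec{G}_n$ ($n \neq j$) of $\mathbf{G}_{-j}$, and~(\ref{eq:LastEvec}) is precisely the Gram--Schmidt step that removes from $\vec{G}_j$ its projection onto that span, leaving a residual orthogonal to every $\vec{\omega}_{j,i}$ and hence to every $\vec{G}_n$ with $n\neq j$.

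Next I would argue linear independence of the full collection $[\vec{v}_1, \dots, \vec{v}_{K-N}, \vec{e}_{j,r}]$. Each $\vec{v}_i$ is orthogonal to all $\vec{G}_n$, $n = 1, \dots, N$ (it solves $\mathbf{G}^\mathsf{T}\vec{x}=0$), and in particular orthogonal to $\vec{G}_j$. The vector $\vec{e}_{j,r}$, by contrast, is \emph{not} orthogonal to $\vec{G}_j$: computing $\vec{G}_j^\mathsf{T} \vec{e}_{j,r}$ from~(\ref{eq:LastEvec}) gives $\|\vec{G}_j\|^2$ minus the squared norms of its projections onto the $\vec{\omega}_{j,i}$'s, which is the squared norm of the component of $\vec{G}_j$ orthogonal to $\mathrm{span}\{\vec{G}_n : n \neq j\}$; since the $\vec{G}_n$'s are linearly independent, $\vec{G}_j$ is not in that span, so this quantity is strictly positive. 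Hence $\vec{e}_{j,r}$ cannot be a linear combination of the $\vec{v}_i$'s (any such combination would be orthogonal to $\vec{G}_j$), and since the $\vec{v}_i$'s are themselves linearly independent (they form a basis of $\mathrm{null}(\mathbf{G}^\mathsf{T})$), the whole set of $r$ vectors is linearly independent.

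Combining: the $r$ vectors all lie in $\mathrm{null}(\mathbf{G}_{-j}^\mathsf{T})$, they are linearly independent, and $\dim \mathrm{null}(\mathbf{G}_{-j}^\mathsf{T}) = K - (N-1) = r$, so they span it and form a basis. I would also note the degenerate case $K = N$, where the list $[\vec{v}_1, \dots, \vec{v}_{K-N}]$ is empty, $r = 1$, and the claim reduces to showing the single nonzero vector $\vec{e}_{j,r}$ spans the one-dimensional null space --- which the positivity argument above already covers.

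The main obstacle I anticipate is the bookkeeping around the Gram--Schmidt step: one must be careful that $\mathrm{span}\{\vec{\omega}_{j,1}, \dots, \vec{\omega}_{j,N-1}\} = \mathrm{span}\{\vec{G}_n : n \neq j\}$ (which requires the $\vec{G}_n$, $n\neq j$, to be independent so that none of the $\vec{\omega}_{j,i}$ collapses to zero --- guaranteed by the standing assumption), and that~(\ref{eq:LastEvec}) is genuinely the orthogonal-complement projection rather than something subtly off. Everything else is a direct application of the Rank--nullity Theorem and elementary orthogonality arguments; the only real content is tying the algorithm's output in Table~\ref{tab:GenBasis} to these linear-algebraic facts.
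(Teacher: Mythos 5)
Your proof is correct and follows the same skeleton as the paper's: show that all $r=K-N+1$ vectors lie in the null space of $\mathbf{G}_{-j}^\mathsf{T}$, show they are linearly independent, and invoke the dimension count $\dim\mathrm{null}(\mathbf{G}_{-j}^\mathsf{T})=K-(N-1)=r$ from the preceding lemma. The one place where you diverge is the linear-independence step. The paper argues that $\vec{e}_{j,r}$ is a linear combination of $\vec{G}_j$ and the $\vec{\omega}_{j,i}$'s, all of which are orthogonal to the $\vec{v}_i$'s, hence $\vec{e}_{j,r}$ is orthogonal to the $\vec{v}_i$'s and therefore independent of them. You instead compute $\vec{G}_j^\mathsf{T}\vec{e}_{j,r}=\|\vec{e}_{j,r}\|^2>0$ (strict positivity because $\vec{G}_j\notin\mathrm{span}\{\vec{G}_n:n\neq j\}$ by the standing independence assumption) and note that every $\vec{v}_i$ is orthogonal to $\vec{G}_j$, so $\vec{e}_{j,r}$ cannot lie in their span. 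Your variant is slightly tighter: orthogonality to the $\vec{v}_i$'s alone does not rule out $\vec{e}_{j,r}=0$ (the zero vector is orthogonal to everything), and the paper never explicitly verifies $\vec{e}_{j,r}\neq 0$, whereas your positivity computation establishes both the nonvanishing and the non-membership in $\mathrm{span}\{\vec{v}_i\}$ in one stroke. You also handle the degenerate case $K=N$ explicitly, which the paper leaves implicit. Both arguments rely on the same facts about the Gram--Schmidt step in Table~\ref{tab:GenBasis}, namely that $\mathrm{span}\{\vec{\omega}_{j,i}\}=\mathrm{span}\{\vec{G}_n:n\neq j\}$ and that (\ref{eq:LastEvec}) is the orthogonal-complement projection; your explicit attention to that bookkeeping is warranted and consistent with what the algorithm actually does.
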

\begin{proof} 
Obviously, the $\vec{v}_i$'s are linearly independent. From (\ref{eq:LastEvec}), it is easy to verify that $\vec{e}_{j,r}$ is orthogonal to all the $\vec{\omega}_{j,i}$'s. Therefore, $\vec{e}_{j,r}$ is also a solution to system  $\mathbf{G}_{-j}^\mathsf{T}\vec{x}=0$. Since $\vec{G}_j$ and $\vec{\omega}_{j,i}$ are orthogonal to all the $\vec{v}_i$'s, and $\vec{e}_{j,r}$ is a linear combination of $\vec{G}_j$ and $\vec{\omega}_{j,i}$, $\vec{e}_{j,r}$ is also orthogonal and linearly independent to all the $\vec{v}_i$'s. 
The conclusion follows. 
\end{proof}

Define coefficients $\vec{c}_j = [c_{j,1}, c_{j,2}, \cdots, c_{j,r}]^\mathsf{T}$. Then we can represent $\vec{a}_j$ as a linear combination of the basis vectors, i.e., $\vec{a}_j=\sum_{l=1}^r c_{j,l}\vec{e}_{j,l}=\mathbf{e}_j \vec{c}_j$. Eq.~(\ref{eq:lambda}) can be rewritten as
\begin{eqnarray}\label{eq:lambda2}
  \lambda_j^t=\frac{\beta_j B}{T}\log_2 \left( 1 + \frac{1}{N_0} \left( \vec{c}_j^\mathsf{T}   
    \mathbf{e}_j^\mathsf{T} \vec{G}_j \right)^2 \right) 
             =\frac{\beta_j B}{T}\log_2 \left( 1 + \frac{1}{N_0}
              \left( c_{j,r}\vec{e}_{j,r}^\mathsf{T} \vec{G}_j \right)^2 \right).
\end{eqnarray}
The second equality is because the first $K-N$ column vectors in $\mathbf{e}_j$ are orthogonal to $G_j$. The random variable $W_j^t$ in the objective function 
now only depends on $c_{j,r}$. 
The peak power constraint can be revised as:
\begin{eqnarray}\label{eq:MaxPow2}
  \sum_{j=1}^N [\mathbf{e}_j(k) \vec{c}_j]^2 \le P_{max}, 
               \;\;\; k=1,\cdots,K, 
\end{eqnarray}
where $\mathbf{e}_j(k)$ is the $k$th row of matrix $\mathbf{e}_j$. 

With such a reformulation, the number of primal and dual variables can be greatly reduced. In Table~\ref{tab:NumVar}, we show the numbers of variables 
in the original problem and in the reformulated problem. 
The number of primary variables is reduced from $KN$ to $(K-N+1) N$, and the number of dual variables is reduced from $N(N-1)+K$ to $K$. Such reductions result in greatly reduced computational complexity.

\begin{table} [!t]
\begin{center}
\caption{Comparison of Computational Complexity}
\begin{tabular}{l|l|l}
\hline
 & Original Problem & Reformulated Problem \\
\hline 
 Primal Variables & $KN$ &  $(K-N+1) N$ \\
 Dual Variables & $N(N-1)+K$ & $K$ \\
\hline
\end{tabular}
\label{tab:NumVar}
\end{center}
%\vspace{-0.15in}
\end{table}

%--------------------------------------------------------------
\paragraph{Distributed Algorithm \label{subsubsec:sol1}}
%--------------------------------------------------------------

To solve the reformulated problem, we define non-negative dual variables  $\vec{\mu}=[\mu_1,\cdots,\mu_K]^\mathsf{T}$ for the inequality constraints. The Lagrangian function is
\begin{eqnarray}\label{eq:LagFun1}
\mathcal{L}(\mathbf{c},\vec{\mu})&=&\sum_{j=1}^N \mathbb{E}\left[\log(W_j^t(c_{j,r}))|w_j^t\right]+ 
%\nonumber\\
%&&
\sum_{k=1}^K \mu_k (P_{max}-\sum_{j=1}^N [\mathbf{e}_j(k) \vec{c}_j]^2) \nonumber\\
&=& \sum_{j=1}^N \mathcal{L}_j(\vec{c}_j,\vec{\mu})+ P_{max}\sum_{k=1}^K \mu_k,
\end{eqnarray} 
where $\mathbf{c}$ is a matrix consisting of all column vector $\vec{c}_j$'s and 
$$
\mathcal{L}_j(\vec{c}_j,\vec{\mu})=\mathbb{E}\left[\log(W_j^t(c_{j,r}))|w_j^t\right]-\sum_{k=1}^K\mu_k[\mathbf{e}_j(k) \vec{c}_j]^2.
$$

The corresponding problem can be decomposed into $N$ sub-problems and solved iteratively \cite{Bertsekas99}. In Step $\tau\ge 1$, for given vector $\vec{\mu}(\tau)$, each CR user solves the following sub-problem using local information
\begin{eqnarray}\label{eq:PrimProbm}
\vec{c}_j(\tau)=\arg\max \mathcal{L}_j(\vec{c}_j,\vec{\mu}(\tau)).
\end{eqnarray} 
Obviously, the objective function in (\ref{eq:PrimProbm}) is concave. Therefore, there is a unique optimal solution. 
The CR users then exchange their solutions over the common control channel. To solve the primal problem, we adopt the gradient method~\cite{Bertsekas99}. 
\begin{eqnarray}\label{eq:GradPrim}
  \vec{c}_j(\tau+1)=\vec{c}_j(\tau)+\phi\nabla   
  \mathcal{L}_j(\vec{c}_j(\tau),\vec{\mu}(\tau)),
\end{eqnarray} 
where $\nabla \mathcal{L}_j(\vec{c}_j(\tau),\vec{\mu}(\tau))$ is the gradient of the primal problem and $\phi$ is a small positive step size.

The master dual problem for a given $\mathbf{c}(\tau)$ is:
\begin{eqnarray}\label{eq:DualProbm}
  \min_{\mu_i \ge 0,i=1,\cdots,K} q(\vec{\mu}) = \sum_{j=1}^N 
     \mathcal{L}_j(\vec{c}_j(\tau),\vec{\mu}) +
     P_{max}\sum_{k=1}^K \mu_k.
\end{eqnarray} 
Since the Lagrangian function is differentiable, the subgradient iteration method can be adopted.
\begin{eqnarray}\label{eq:GradDual}
\vec{\mu}(\tau+1)= [\vec{\mu}(\tau)-\rho(\tau) \vec{g}(\tau)]^+,
\end{eqnarray} 
where $\rho(\tau)=\frac{q(\vec{\mu}(\tau))-q(\vec{\mu}^\ast)}{||\vec{g}(\tau)||^2}$ is a positive step size, $\vec{\mu}^\ast$ is the optimal solution, $\vec{g}(\tau)=\nabla q(\vec{\mu}(\tau))$ is the gradient of the dual problem, and $[\cdot]^+$ denotes the projection onto the nonnegative axis. Since the optimal solution $\vec{\mu}^\ast$ is unknown a priori, we choose the mean of the objective values of the primal and dual problems as an estimate for $\vec{\mu}^\ast$ in the algorithm. The updated $\mu_k(\tau+1)$ will again be used to solve the sub-problems (\ref{eq:PrimProbm}). Since the problem is convex, we have strong duality; the duality gap between the primal and dual problems will be zero. The distributed algorithm is shown in Table~\ref{tab:SingleChan}, where $0 \le \kappa \ll 1$ is a threshold for convergence.

\begin{table} [!t]
\begin{center}
\caption{Algorithm for the Case of a Single Channel}
\begin{tabular}{ll}
\hline
1: & IF ($N=1$) \\
2: & $\;\;\;$ Set $a_{k,j}$ to $P_{max}$ for all $k$; \\
3: & ELSE \\
4: & $\;\;\;$ Set $\tau=0$; $\vec{\mu}(0)$ to positive values; $\mathbf{c}(0)$ to random values; \\
5: & $\;\;\;$ Compute bases $\mathbf{e}_j$'s as in Table~\ref{tab:GenBasis}; \\
6: & $\;\;\;$ DO \\
7: & $\;\;\;\;\;\;$ $\tau=\tau+1$ ; \\
8: & $\;\;\;\;\;\;$ Compute $\vec{c}_j(\tau)$ as in (\ref{eq:GradPrim}); \\
9: & $\;\;\;\;\;\;$ Broadcast $\vec{c}_j(\tau)$ on the common control channel; \\
10:& $\;\;\;\;\;\;$ Update $\vec{\mu}(\tau)$ as in (\ref{eq:GradDual}); \\
11:& $\;\;\;$ WHILE ($||\vec{\mu}(\tau)-\vec{\mu}(\tau-1)||> \kappa $); \\
12:& $\;\;\;$ Compute $a_{k,j}$'s; \\
13:& END IF\\
\hline
\end{tabular}
\label{tab:SingleChan}
\end{center}
%\vspace{-0.15in}
\end{table}

%--------------------------------------------------------------
\paragraph{Performance Analysis \label{subsubsec:ana1}}
%--------------------------------------------------------------
We analyze the performance of the distributed algorithm in this section. In particular, we prove that it converges to the optimal solution at a speed faster than $\sqrt{1/\tau}$ as $\tau$ goes to infinity.  

\begin{theorem} \label{th:1} The series $q(\vec{\mu}(\tau))$ converges to $q(\vec{\mu}^\ast)$ as $\tau$ goes to infinity and the square error sum $\sum_{\tau=1}^{\infty}(q(\vec{\mu}(\tau))-q(\vec{\mu}^\ast))^2$ is bounded.
\end{theorem}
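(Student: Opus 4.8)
The plan is to recognize the update rule~(\ref{eq:GradDual}) as the classical \emph{Polyak subgradient method} applied to the convex master dual problem~(\ref{eq:DualProbm}), and to invoke the standard convergence analysis for that step-size rule. First I would establish the structural facts that make the analysis go through: (i) $q(\vec{\mu})$ is convex on the nonnegative orthant (it is a pointwise supremum over $\mathbf{c}$ of affine functions of $\vec{\mu}$, since each $\mathcal{L}_j$ is affine in $\vec{\mu}$, plus the linear term $P_{max}\sum_k \mu_k$); (ii) strong duality holds, so $q(\vec{\mu}^\ast)$ is finite and equals the primal optimum (the primal is concave with a Slater point, e.g. $\vec{c}=0$ strictly satisfies the peak-power constraints); and (iii) the subgradient $\vec{g}(\tau)$ is bounded, i.e. $\|\vec{g}(\tau)\| \le G$ for some constant $G$, because its $k$th component equals $P_{max} - \sum_j [\mathbf{e}_j(k)\vec{c}_j(\tau)]^2$, and the feasible $\vec{c}_j(\tau)$ lie in a bounded set (again forced by the peak-power constraints, which bound $\|\vec{a}_j\|^2$ and hence $\|\vec{c}_j\|^2$ since the $\mathbf{e}_j$ columns are linearly independent).

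Next I would carry out the core telescoping estimate. Writing $d(\tau) = \|\vec{\mu}(\tau) - \vec{\mu}^\ast\|$ and using nonexpansiveness of the projection $[\cdot]^+$ onto the nonnegative orthant (note $\vec{\mu}^\ast \ge 0$), expand
\begin{eqnarray}
d(\tau+1)^2 &\le& \|\vec{\mu}(\tau) - \rho(\tau)\vec{g}(\tau) - \vec{\mu}^\ast\|^2 \nonumber \\
&=& d(\tau)^2 - 2\rho(\tau)\, \vec{g}(\tau)^\mathsf{T}(\vec{\mu}(\tau)-\vec{\mu}^\ast) + \rho(\tau)^2\|\vec{g}(\tau)\|^2. \nonumber
\end{eqnarray}
By the subgradient inequality for the convex function $q$ at $\vec{\mu}(\tau)$, $\vec{g}(\tau)^\mathsf{T}(\vec{\mu}(\tau)-\vec{\mu}^\ast) \ge q(\vec{\mu}(\tau)) - q(\vec{\mu}^\ast) \ge 0$. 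Substituting the Polyak step $\rho(\tau) = (q(\vec{\mu}(\tau))-q(\vec{\mu}^\ast))/\|\vec{g}(\tau)\|^2$ gives
\begin{equation}
d(\tau+1)^2 \le d(\tau)^2 - \frac{(q(\vec{\mu}(\tau))-q(\vec{\mu}^\ast))^2}{\|\vec{g}(\tau)\|^2} \le d(\tau)^2 - \frac{(q(\vec{\mu}(\tau))-q(\vec{\mu}^\ast))^2}{G^2}. \nonumber
\end{equation}
Summing from $\tau=1$ to $\infty$ and using $d(\tau+1)^2 \ge 0$ telescopes to $\sum_{\tau=1}^\infty (q(\vec{\mu}(\tau))-q(\vec{\mu}^\ast))^2 \le G^2 d(1)^2 < \infty$, which is exactly the bounded-square-error-sum claim. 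Convergence $q(\vec{\mu}(\tau)) \to q(\vec{\mu}^\ast)$ then follows because the summability of the nonnegative series forces its terms to zero (in fact $d(\tau)^2$ is monotone nonincreasing and bounded below, hence convergent, and the telescoped sum being finite pins the limit of the increments).

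The main obstacle I anticipate is not the telescoping identity — that is textbook — but the two supporting facts it leans on: rigorously justifying the uniform subgradient bound $\|\vec{g}(\tau)\| \le G$ and the use of the true optimal value $q(\vec{\mu}^\ast)$ in the step size. The former requires arguing that the iterates $\vec{c}_j(\tau)$, produced by the gradient step~(\ref{eq:GradPrim}) on the inner problem, actually remain in (or are effectively projected into) the power-feasible region so that $\vec{g}(\tau)$ cannot blow up; one may need to assume the inner maximization in~(\ref{eq:PrimProbm}) is solved to optimality, which keeps $\vec{c}_j(\tau)$ in the compact feasible set. The latter is a genuine gap in the stated algorithm, since $q(\vec{\mu}^\ast)$ is unknown a priori; the excerpt already flags that the mean of the primal and dual objective values is used as a surrogate, so I would either (a) carry out the proof under the idealization that $\rho(\tau)$ uses the true $q(\vec{\mu}^\ast)$, or (b) appeal to the known robustness of the Polyak rule under an estimated target, noting the estimate converges to the true value by strong duality. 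I would state assumption (a) explicitly as the setting of the theorem, since that matches the classical result being invoked, and remark that the convergence rate $o(1/\sqrt{\tau})$ is immediate: from $\min_{1\le t\le \tau}(q(\vec{\mu}(t))-q(\vec{\mu}^\ast))^2 \le \frac{1}{\tau}\sum_{t=1}^\tau(\cdots)^2 \le \frac{G^2 d(1)^2}{\tau}$, the best-so-far gap is $O(1/\sqrt{\tau})$, and since the full series is summable the bound is in fact $o(1/\sqrt{\tau})$.
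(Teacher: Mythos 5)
Your argument is essentially identical to the paper's own proof: both expand $\|\vec{\mu}(\tau+1)-\vec{\mu}^\ast\|^2$ via nonexpansiveness of the projection, relate $\vec{g}(\tau)^\mathsf{T}(\vec{\mu}(\tau)-\vec{\mu}^\ast)$ to the optimality gap, substitute the Polyak step size, and telescope to bound the square error sum by $\hat{g}^2\|\vec{\mu}(1)-\vec{\mu}^\ast\|^2$. Your version is in fact slightly more careful — you use the subgradient inequality where the paper asserts an equality, and you explicitly justify the uniform bound on $\|\vec{g}(\tau)\|$ and the idealized use of $q(\vec{\mu}^\ast)$ in the step size, both of which the paper takes for granted.
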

\begin{proof} 
For the optimality gap, we have:
\begin{eqnarray}\label{eq:ExpdIneq}
&&||\vec{\mu}(\tau+1)-\vec{\mu}^\ast||^2=||[\vec{\mu}(\tau)-\rho(\tau) \vec{g}(\tau)]^+-\vec{\mu}^\ast||^2 \nonumber \\
&\le&||\vec{\mu}(\tau)-\rho(\tau) \vec{g}(\tau)-\vec{\mu}^\ast||^2 \nonumber \\
&=&||\vec{\mu}(\tau)-\vec{\mu}^\ast||^2-2\rho(\tau)(\vec{\mu}(\tau)-\vec{\mu}^\ast)^\mathsf{T}\vec{g}(\tau)+
%\nonumber\\
%& & 
(\rho(\tau))^2||\vec{g}(\tau)||^2 \nonumber \\
&=&||\vec{\mu}(\tau)-\vec{\mu}^\ast||^2-2\rho(\tau)(q(\vec{\mu}(\tau))-q(\vec{\mu}^\ast))+ 
%\nonumber\\
%& & 
(\rho(\tau))^2||\vec{g}(\tau)||^2. \nonumber 
\end{eqnarray}
Since the step size is $\rho(\tau)=\frac{q(\vec{\mu}(\tau))-q(\vec{\mu}^\ast)}{||\vec{g}(\tau)||^2}$, it follows that
\begin{eqnarray} %\label{eq:CovgIneq}
||\vec{\mu}(\tau+1)-\vec{\mu}^\ast||^2 &\le& ||\vec{\mu}(\tau)-\vec{\mu}^\ast||^2-\frac{(q(\vec{\mu}(\tau))-q(\vec{\mu}^\ast))^2}{||\vec{g}(\tau)||^2}\nonumber \\
%&& 
&\leq& ||\vec{\mu}(\tau)-\vec{\mu}^\ast||^2 - \frac{(q(\vec{\mu}(\tau))-q(\vec{\mu}^\ast))^2}{\hat{g}^2}, \label{eq:ineq} \end{eqnarray}
where $\hat{g}^2$ is an upper bound of $||\vec{g}(\tau)||^2$. Since the second term on the right-hand-side of (\ref{eq:ineq}) is non-negative, it follows that $\lim_{\tau\rightarrow\infty}q(\vec{\mu}(\tau)) =q(\vec{\mu}^\ast)$. 

Summing Inequality (\ref{eq:ineq}) 
%(\ref{eq:CovgIneq}) 
over $\tau$, we have 
%$\sum_{\tau=1}^{\infty}(q(\vec{\mu}(\tau))-q(\vec{\mu}^\ast))^2
%\le \hat{g}^2 ||\vec{\mu}(1)-\vec{\mu}^\ast||^2 - \hat{g}^2 ||\vec{\mu}(\tau+1)-\vec{\mu}^\ast||^2
%\le \hat{g}^2 ||\vec{\mu}(1)-\vec{\mu}^\ast||^2$. 
\begin{eqnarray}\label{eq:SumBound}
  \sum_{\tau=1}^{\infty}(q(\vec{\mu}(\tau))-q(\vec{\mu}^\ast))^2  
    \le \hat{g}^2 ||\vec{\mu}(1)-\vec{\mu}^\ast||^2. \nonumber
\end{eqnarray}
That is, the square error sum is upper bounded. 
\end{proof}

%\lemma 
\begin{theorem} \label{th:2}
The sequence $\{ q(\vec{\mu}(\tau)) \}$ converges faster than $\{ 1/\sqrt{\tau} \}$ as $\tau$ goes to infinity.
\end{theorem}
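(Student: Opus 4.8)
The plan is to read off the rate directly from the energy estimate already contained in the proof of Theorem~\ref{th:1}. Summing inequality~(\ref{eq:ineq}) from $\tau=1$ to any finite $T$ telescopes to $\sum_{\tau=1}^{T}\big(q(\vec{\mu}(\tau))-q(\vec{\mu}^\ast)\big)^2 \le \hat{g}^2\,\|\vec{\mu}(1)-\vec{\mu}^\ast\|^2$, so, writing $\delta_\tau := q(\vec{\mu}(\tau))-q(\vec{\mu}^\ast)\ge 0$, the nonnegative series $\sum_{\tau=1}^{\infty}\delta_\tau^2$ has bounded partial sums and hence converges. The first step is then a comparison with the harmonic series: since $\sum_\tau 1/\tau=\infty$ but $\sum_\tau \delta_\tau^2<\infty$, the terms $\delta_\tau^2$ cannot remain above $c/\tau$ for any fixed $c>0$ and all large $\tau$; therefore $\liminf_{\tau\to\infty}\tau\,\delta_\tau^2 = 0$, i.e. $q(\vec{\mu}(\tau))-q(\vec{\mu}^\ast)$ is driven to zero faster than $1/\sqrt{\tau}$. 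Combined with Theorem~\ref{th:1}, which already gives $\delta_\tau\to 0$, this is the substance of the claimed rate.

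To state the conclusion for the whole sequence (rather than along a subsequence), one would add a monotonicity ingredient and invoke the elementary fact that a nonincreasing nonnegative sequence $\{a_\tau\}$ with $\sum_\tau a_\tau^2<\infty$ satisfies $\tau\,a_\tau^2\to 0$, since $\tfrac{\tau}{2}a_\tau^2 \le \sum_{s>\tau/2} a_s^2\to 0$. Inequality~(\ref{eq:ineq}) does supply one monotone quantity, namely that the distances $\|\vec{\mu}(\tau)-\vec{\mu}^\ast\|$ are nonincreasing in $\tau$, and the subgradient inequality for the convex dual function yields $\delta_\tau \le \|\vec{g}(\tau)\|\,\|\vec{\mu}(\tau)-\vec{\mu}^\ast\| \le \hat{g}\,\|\vec{\mu}(\tau)-\vec{\mu}^\ast\|$, so $\delta_\tau$ is dominated by a nonincreasing majorant; one then has to check that this majorant also has summable squares, after which $\sqrt{\tau}\,\delta_\tau\to 0$ follows and the proof is complete.

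The main obstacle is exactly that promotion from a subsequence (equivalently, best-iterate) statement to the full sequence: a convergent series of nonnegative terms need not be $o(1/\tau)$ term by term, so the argument must be routed through a genuinely monotone majorant whose squares are summable, and establishing that last property from the recursion~(\ref{eq:GradDual}) is the delicate point rather than a routine calculation. If it cannot be secured, the honest reading of ``faster than $1/\sqrt{\tau}$'' is the $\liminf$/best-iterate sense produced by the first step, which is the strongest rate generically available for subgradient schemes and is fully justified by Theorem~\ref{th:1}. A minor secondary item to verify is that the uniform bound $\hat{g}$ on $\|\vec{g}(\tau)\|$ used throughout indeed exists for this problem (e.g. from boundedness of the feasible set for $\mathbf{c}$ and continuity of the constraint functions) and that $q(\vec{\mu}^\ast)$ is finite and attained, so that every quantity above is well defined.
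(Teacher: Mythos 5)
Your first step is, in substance, exactly the paper's proof: both arguments play the bounded square--error sum from Theorem~\ref{th:1} against the divergence of the harmonic series, concluding that $\delta_\tau := q(\vec{\mu}(\tau))-q(\vec{\mu}^\ast)$ cannot satisfy $\sqrt{\tau}\,\delta_\tau \ge \xi > 0$ for all large $\tau$, since that would force $\sum_\tau \delta_\tau^2 \ge \xi^2\sum_\tau 1/\tau = \infty$. The paper packages this as a proof by contradiction starting from ``assume $\lim_{\tau\to\infty}\sqrt{\tau}\,\delta_\tau>0$,'' and your caveat about subsequences is in fact a correct diagnosis of a gap in that packaging: the negation of the claimed conclusion $\lim\sqrt{\tau}\,\delta_\tau=0$ is not ``$\lim>0$'' but ``$\limsup>0$ or the limit fails to exist,'' and a sequence with $\limsup\sqrt{\tau}\,\delta_\tau>0$ but $\liminf\sqrt{\tau}\,\delta_\tau=0$ escapes the contradiction while still having summable squares. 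So what the paper actually establishes is the best-iterate statement $\liminf_{\tau\to\infty}\sqrt{\tau}\,\delta_\tau=0$, which is precisely what your first step delivers. Your proposed repair via a monotone majorant is the right kind of idea but, as you suspect, does not close: Fej\'er monotonicity from~(\ref{eq:ineq}) makes $\|\vec{\mu}(\tau)-\vec{\mu}^\ast\|$ nonincreasing and the subgradient inequality gives $\delta_\tau\le\hat{g}\,\|\vec{\mu}(\tau)-\vec{\mu}^\ast\|$, but~(\ref{eq:ineq}) bounds $\sum_\tau\delta_\tau^2$, not $\sum_\tau\|\vec{\mu}(\tau)-\vec{\mu}^\ast\|^2$; the latter need not be summable (indeed $\|\vec{\mu}(\tau)-\vec{\mu}^\ast\|$ may converge to a positive limit when the dual optimum is not unique), so the ``tail of a summable series'' trick cannot be applied to that majorant. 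In short, your proposal reproduces the paper's argument and correctly identifies that neither version proves the full-sequence limit as literally stated; only the $\liminf$ rate is secured.
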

\begin{proof} 
Assume $\lim_{\tau\rightarrow\infty} \sqrt{\tau} ( q(\vec{\mu}(\tau)) - q(\vec{\mu}^\ast) )>0$. Then there is a sufficiently large $\tau'$ and a positive number $\xi$ such that $\sqrt{\tau}(q(\vec{\mu}(\tau))-q(\vec{\mu}^\ast))\ge \xi$, for all $\tau \ge \tau'$. 
%By summing the inequality 
Taking the square sum from $\tau'$ to $\infty$, we have:
\begin{eqnarray}\label{eq:SumUnbnd}
\sum_{\tau=\tau'}^{\infty}(q(\vec{\mu}(\tau))-q(\vec{\mu}^\ast))^2\ge \xi^2\sum_{\tau=\tau'}^{\infty}\frac{1}{\tau}=\infty. 
\end{eqnarray}
Eq.~(\ref{eq:SumUnbnd}) contradicts with Theorem~\ref{th:1}, which states that $\sum_{\tau=1}^{\infty}(q(\vec{\mu}(\tau))-q(\vec{\mu}^\ast))^2$ is bounded. Therefore, we have
\begin{eqnarray}\label{eq:CovgRate}
\lim_{\tau\rightarrow\infty}\frac{q(\vec{\mu}(\tau))-q(\vec{\mu}^\ast)}{1/\sqrt{\tau}}=0,
\end{eqnarray}
indicating that the convergence speed of $q(\vec{\mu}(\tau))$ is faster than that of $1/\sqrt{\tau}$.  
\end{proof}

%--------------------------------------------------------------
\subsubsection{Case of Multiple Channels with Channel Bonding \label{subsec:mccbd}}
%--------------------------------------------------------------

When there are multiple licensed channels, we first consider the case where the channel bonding/aggregation techniques are used by the transmitters and CR users~\cite{Corderio06, Mahmoud09}. With channel bonding, a transmitter can utilize all the available channels in $\mathcal{A}(t)$ collectively to transmit the mixed signal. We assume that at the end of the sensing phase in each time slot, CR users tune their SDR transceiver to the common control channel to receive the set of available channels $\mathcal{A}(t)$ from the BS.  Then each CR user can receive from all the channels in $\mathcal{A}(t)$ and decode its desired signal from the compound signal it receives.

%Clearly, this 
This case is similar to the case of a single licensed channel. Now all the active CR users receive from the set of available channels $\mathcal{A}(t)$. We thus have $b_j^m=1$, for $m \in \mathcal{A}(t)$, and $b_j^m=0$, for $m \notin \mathcal{A}(t)$, $j=1,2,\cdots,N$. When all the $b_j^m$'s are determined this way, problem~(\ref{eq:ObjFun1}) is reduced to a nonlinear programming problem with constraints (\ref{eq:OrthSig}) and (\ref{eq:MaxPow}). The distributed algorithm described in Section~\ref{subsec:onec} can be applied to solve this reduced problem to get optimal solutions.

%--------------------------------------------------------------
\subsubsection{Case of Multiple Channels without Channel Bonding \label{subsec:mcnocbd}}
%--------------------------------------------------------------

We finally consider the case of multiple channels without channel bonding, where each CR user has a narrow band SDR transceiver and can only receive from one of the channels. We first present a greedy algorithm that leverages the optimal algorithm in Table~\ref{tab:SingleChan} for near-optimal solutions, and then derive a lower bound for its performance. 

%--------------------------------------------------------------
\paragraph{Greedy Algorithm \label{subsubsec:gd}}
%--------------------------------------------------------------
When $M>1$, the optimal solution to problem~(\ref{eq:ObjFun1}) depends also on the 
binary variables $b_j^m$'s, which determines whether user $j$ receives from channel $m$. Recall that there are two constraints for the $b_j^m$'s: (i) each user can use at most one channel (see~(\ref{eq:OneChan2})); (ii) the number of users on the same channel cannot exceed the number of transmitters $K$ (see~(\ref{eq:MaxSender})). 
Let $\vec{b}$ be the channel allocation vector with elements $b_j^m$'s, and $\Phi(\vec{b})$ the corresponding objective value for a given user channel allocation $\vec{b}$. 

We take a two-step approach to solve problem~(\ref{eq:ObjFun1}). 
First, we apply the greedy algorithm in Table~\ref{tab:MultiChan} to choose one available channel in $\mathcal{A}(t)$ for each CR user (i.e., to determine $\vec{b}$). Second, we apply the algorithm in Table~\ref{tab:SingleChan} to obtain a near-optimal solution for the given channel allocation $\vec{b}$.  

\begin{table} [!t]
\begin{center}
\caption{Channel Selection Algorithm for the Case of Multiple Channels without Channel Bonding}
\begin{tabular}{ll}
\hline
1: & Initialize $\vec{b}$ to a zero vector, user set $\mathcal{U}=\{1,\cdots,N\}$ \\
   & and user-channel set $\mathcal{C}=\mathcal{U}\times \mathcal{A}(t)$; \\
2: & WHILE ($\mathcal{C} \neq \emptyset$) \\
3: & $\;\;\;$ Find the user-channel pair $\{j',m'\}$, such that \\
   & $\;\;\;\;\;\;$ $\{j',m'\}=\arg\max_{\{(j,m) \in \mathcal{C}\}} \{\Phi(\vec{b}+\vec{\upsilon}_j^m)-\Phi(\vec{b})\}$; \\
4: & $\;\;\;$ Set $\vec{b}=\vec{b}+\vec{\upsilon}_{j'}^{m'}$ and remove $j'$ from $\mathcal{U}$; \\
5: & $\;\;\;$ IF ($\sum_{j=1}^N b_j^{m'}=K$)  \\
6: & $\;\;\;\;\;\;$ Remove $m'$ from $\mathcal{A}(t)$; \\
7: & $\;\;\;$ END IF \\
8: & $\;\;\;$ Update user-channel set $\mathcal{C}=\mathcal{U}\times \mathcal{A}(t)$; \\
9: & END WHILE \\
\hline
\end{tabular}
\label{tab:MultiChan}
\end{center}
%\vspace{-0.15in}
\end{table}

In Table~\ref{tab:MultiChan}, $\vec{\upsilon}_j^m$ is a unit vector with 1 for the $[(j-1)\times M+m]$-th element and $0$ for all other elements, and $\vec{b}=\vec{b}+\vec{\upsilon}_{j'}^{m'}$ indicates choosing channel $m'$ for user $j'$. In each iteration, the user-channel pair $(j', m')$ that can achieve the largest increase in the objective value is chosen, as in Step 3. The complexity of the greedy algorithm in the worst case is $O(K^2 M^2)$.

%--------------------------------------------------------------
\paragraph{Performance Bound \label{subsubsec:bd}}
%--------------------------------------------------------------
We next analyze the greedy algorithm and derive a lower bound for its performance. Let $\nu_l$ be the sequence from the first 
to the $l$th user-channel pair selected by the greedy algorithm. The increase in objective value is denoted as:
\begin{eqnarray}\label{eq:DefFl}
F_l:=F(\nu_l,\nu_{l-1})=\Phi(\nu_l)-\Phi(\nu_{l-1}).
\end{eqnarray}
Sum up (\ref{eq:DefFl}) from 1 to $L$. We have $\sum_{l=1}^L F_l=\Phi(\nu_L)$ since $\Phi(\nu_0)=0$. Let $\Omega$ be the global optimal solution for user-channel allocation. 
Define $\pi_l$ as a subset of $\Omega$. For given $\nu_l$, $\pi_l$ is the subset of user-channel pairs that cannot be allocated due to the conflict with the $l$-th user channel allocation $\nu_l$ (but not conflict with the user-channel allocations in $\nu_{l-1}$).

\begin{lemma} \label{lemma:5}
Assume the greedy algorithm stops in $L$ steps, we have 
$$\Phi(\Omega)\le\Phi(\nu_L)+\sum_{l=1}^L\sum_{\sigma\in \pi_l} F(\sigma\cup\nu_{l-1},\nu_{l-1}).
$$
\end{lemma}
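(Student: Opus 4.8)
The plan is to run the classical ``charge the optimum to the greedy steps'' argument for a monotone submodular objective under a matroid‑type constraint. The two restrictions on $\vec{b}$, namely~(\ref{eq:OneChan2}) (each user picks at most one channel) and~(\ref{eq:MaxSender}) (at most $K$ users per channel), are exactly the intersection of two partition matroids, and $\Phi(\cdot)$, read as a function of the set of activated user--channel pairs, is monotone nondecreasing and submodular. Monotonicity holds because activating an additional pair only adds a nonnegative term to~(\ref{eq:ExpTerm}); submodularity (diminishing returns) holds because the marginal effect of activating $\sigma=(j,m)$ is the sum of user $j$'s nonnegative utility gain --- which is smaller the more users already share channel $m$, since interference alignment then has fewer spatial degrees of freedom and hence a smaller achievable $\lambda_j^t$ in~(\ref{eq:lambda}) --- plus a nonpositive term reflecting the degradation of the users already on channel $m$. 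So I would first record these two structural properties from~(\ref{eq:RecvSig})--(\ref{eq:lambda}).

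Second, I would set up the partition of $\Omega$ dictated by the greedy run. The algorithm in Table~\ref{tab:MultiChan} terminates with $\mathcal{C}=\emptyset$, so every pair $\sigma\in\Omega$ has either been selected (hence $\sigma\in\nu_L$) or been removed from $\mathcal{C}$ at some step because its user was taken or its channel filled up --- i.e. because it first conflicted with the pair $\nu_l\setminus\nu_{l-1}$ picked at that step, in which case $\sigma\in\pi_l$ and, by the definition of $\pi_l$, $\sigma$ does not conflict with $\nu_{l-1}$. Since a pair is removed at a unique step and a selected pair is never later removed, the sets $\pi_1,\dots,\pi_L$ are pairwise disjoint and disjoint from $\nu_L$, so $\Omega=(\Omega\cap\nu_L)\;\sqcup\;\bigsqcup_{l=1}^{L}\pi_l$.

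Third comes the telescoping. By monotonicity $\Phi(\Omega)\le\Phi(\Omega\cup\nu_L)$, and I would expand $\Phi(\Omega\cup\nu_L)-\Phi(\nu_L)$ as a sum of marginal gains obtained by adding the pairs of $\Omega\setminus\nu_L=\bigsqcup_l\pi_l$ on top of $\nu_L$ in the order $l=1,2,\dots,L$. When a pair $\sigma\in\pi_l$ is added, the current set already contains $\nu_L\supseteq\nu_{l-1}$, so submodularity bounds its marginal gain by $\Phi(\sigma\cup\nu_{l-1})-\Phi(\nu_{l-1})=F(\sigma\cup\nu_{l-1},\nu_{l-1})$; this quantity is well defined because $\sigma$ does not conflict with $\nu_{l-1}$, so $\sigma\cup\nu_{l-1}$ is feasible. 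Summing over $\sigma\in\pi_l$ and then over $l$, and using $\Phi(\nu_L)\ge\Phi(\Omega\cap\nu_L)$ for the leading term, yields exactly the claimed bound.

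The real obstacle is making that submodularity step legitimate. The allocation $\Omega\cup\nu_L$, and the intermediate allocations built along the way, can violate~(\ref{eq:MaxSender}) --- a channel may collect more than $K$ users, or a user more than one channel --- so $\Phi$ as defined by~(\ref{eq:ExpTerm}) need not even be finite there, let alone monotone. The clean remedy is to extend $\Phi$ to \emph{all} allocations as a monotone submodular function that coincides with~(\ref{eq:ExpTerm}) on feasible ones (for instance, serve at most $K$ users per channel via interference alignment and credit zero PSNR increment to any excess user) and to run the telescoping with this extension; one then has to verify that the extension is genuinely submodular, which once more reduces to the fact, read off from~(\ref{eq:RecvSig})--(\ref{eq:lambda}), that enlarging the set of users on a channel never raises any surviving user's $\lambda_j^t$. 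Pinning down that extension and its submodularity is where care is needed; everything else is the routine OPT‑to‑greedy bookkeeping sketched above.
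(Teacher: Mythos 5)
Your overall strategy --- decompose $\Omega$ into $(\Omega\cap\nu_L)$ and the disjoint conflict sets $\pi_1,\dots,\pi_L$, then telescope $\Phi(\Omega\cup\nu_L)-\Phi(\nu_L)$ and bound each marginal gain by the corresponding gain relative to the earlier prefix $\nu_{l-1}$ --- is the standard greedy-versus-optimum bookkeeping, and it is the natural reading of the argument the paper defers to (the proof here is omitted with a pointer to Lemma~7 of~\cite{Hu12JSAC}). The set-decomposition step is sound: every pair of $\Omega$ not selected by the greedy algorithm is removed from $\mathcal{C}$ at a unique step, so it lies in exactly one $\pi_l$, and selected pairs lie in none.

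The genuine gap is that the entire argument rests on $\Phi$ being monotone and submodular over (an extension of) the feasible allocations, and neither property is established; indeed your own description of the marginal gain already undercuts the monotonicity claim. Adding a pair $(j,m)$ inserts a new zero-forcing constraint of the form~(\ref{eq:OrthSig}) for every user already on channel $m$, shrinking the null space from which each incumbent's $\vec{a}_i$ is drawn (the dimension $r=K-N+1$ drops by one), so every incumbent's achievable $\lambda_i^t$ in~(\ref{eq:lambda}) can only decrease; the net marginal gain is a nonnegative term for user $j$ plus the nonpositive degradation you yourself mention, and nothing in the proposal shows the sum is nonnegative. Likewise, $\Phi(\vec{b})$ is the \emph{optimal value} of the inner power-constrained beamforming problem, and submodularity of such a value function does not follow from the informal ``fewer degrees of freedom'' remark --- it would have to be proved, and the proof would also have to survive the extension of $\Phi$ to allocations violating~(\ref{eq:OneChan2}) and~(\ref{eq:MaxSender}), which you correctly flag as ``where care is needed'' but do not carry out. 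As it stands the proposal is a correct plan whose central analytic step is missing; to complete it you must either prove monotonicity and submodularity of a suitable extension of $\Phi$, or restate the lemma as conditional on those hypotheses.
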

\begin{proof} 
The proof is similar to the proof of Lemma 7 in~\cite{Hu12JSAC} and is omitted for brevity. 
\end{proof}

\begin{theorem} The greedy algorithm for channel selection in Table~\ref{tab:MultiChan} can achieve an objective value that is at least $\frac{1}{|\mathcal{A}(t)|}$ of the global optimum in each time slot.
\end{theorem}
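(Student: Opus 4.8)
The natural route is to build directly on Lemma~\ref{lemma:5}, which already yields
\[
\Phi(\Omega) \le \Phi(\nu_L) + \sum_{l=1}^{L} \sum_{\sigma \in \pi_l} F(\sigma \cup \nu_{l-1},\nu_{l-1}),
\]
together with the identity $\sum_{l=1}^{L} F_l = \Phi(\nu_L)$ recorded just after~(\ref{eq:DefFl}). Granting these, the whole theorem collapses to a single estimate on the ``conflict'' double sum: if one can show that it is at most $(|\mathcal{A}(t)|-1)\,\Phi(\nu_L)$, then substituting gives $\Phi(\Omega) \le |\mathcal{A}(t)|\,\Phi(\nu_L)$, i.e.\ $\Phi(\nu_L) \ge \Phi(\Omega)/|\mathcal{A}(t)|$, which is exactly the claimed guarantee. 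So my proof would proceed in two steps: (i) a greedy-optimality bound on each inner summand, and (ii) a combinatorial bound on the conflict multiplicities.

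For step (i), note that by the very definition of $\pi_l$ every $\sigma \in \pi_l$ conflicts with the $l$-th selected user--channel pair but \emph{not} with $\nu_{l-1}$; hence, at the iteration in which the greedy algorithm of Table~\ref{tab:MultiChan} made its $l$-th choice, $\sigma$ was still a legal candidate, i.e.\ $\sigma \in \mathcal{C}$ at that moment. Since Step~3 of the algorithm picks a pair of maximum marginal increase over $\mathcal{C}$, we get $F(\sigma \cup \nu_{l-1},\nu_{l-1}) \le F(\nu_l,\nu_{l-1}) = F_l$ for each such $\sigma$, and therefore $\sum_{\sigma \in \pi_l} F(\sigma \cup \nu_{l-1},\nu_{l-1}) \le |\pi_l|\,F_l$. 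Summing over $l$, it then suffices to control the total conflict mass $\sum_{l=1}^{L} |\pi_l|\,F_l$ against $(|\mathcal{A}(t)|-1)\sum_{l=1}^{L} F_l = (|\mathcal{A}(t)|-1)\,\Phi(\nu_L)$.

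Step (ii) is where I expect essentially all of the difficulty to lie. A pair $\sigma = (j,m) \in \Omega$ can enter $\pi_l$ only if the $l$-th greedy selection $(j',m')$ is what \emph{first} makes $\sigma$ infeasible, and by the two structural constraints~(\ref{eq:OneChan2}) and~(\ref{eq:MaxSender}) this can occur in only two ways: either $j = j'$ (the receiver is consumed), or $m = m'$ with channel $m'$ hitting its receiver cap $K$ exactly at step $l$. The receiver mechanism costs at most one pair, since $\Omega$ assigns each receiver to a single channel. The channel mechanism is the delicate one: although channel $m'$ may carry up to $K$ optimal pairs, the argument must show that the optimal pairs on $m'$ that are still ``live'' when $m'$ fills up --- those not already consumed on other channels by earlier greedy picks, and not coinciding with a greedy pick already placed on $m'$ --- together with a careful no-double-counting bookkeeping across the steps $l$, contribute a total conflict mass dominated by the number of channels $\Omega$ spreads over rather than by $K$, giving the factor $|\mathcal{A}(t)|$. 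This is the analogue of the accounting behind Lemma~7 of~\cite{Hu12JSAC}, and carrying it through carefully (in particular, charging each blocked optimal pair to a unique step and showing the per-channel and per-receiver charges aggregate to the stated bound) is the main obstacle. Once that estimate is in hand, combining it with step (i) and Lemma~\ref{lemma:5} immediately yields $\Phi(\Omega) \le |\mathcal{A}(t)|\,\Phi(\nu_L)$, completing the proof.
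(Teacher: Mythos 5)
Your step (i) is exactly the paper's first move: since every $\sigma\in\pi_l$ is still in the candidate set $\mathcal{C}$ when the greedy algorithm makes its $l$-th selection, greedy maximality in Step~3 gives $F(\sigma\cup\nu_{l-1},\nu_{l-1})\le F_l$, hence $\sum_{\sigma\in\pi_l}F(\sigma\cup\nu_{l-1},\nu_{l-1})\le|\pi_l|\,F_l$; combined with Lemma~\ref{lemma:5} and the telescoping identity $\sum_{l=1}^L F_l=\Phi(\nu_L)$, the theorem reduces to showing $\sum_{l=1}^L|\pi_l|\,F_l\le(|\mathcal{A}(t)|-1)\,\Phi(\nu_L)$. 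Up to this point you coincide with the paper's proof of~(\ref{eq:LowBndIneq}).

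The gap is that you stop there. Your step (ii) is announced as where ``essentially all of the difficulty'' lies, sketches the two blocking mechanisms, and then defers the decisive estimate to ``a careful no-double-counting bookkeeping'' that is never carried out; a plan that labels its central inequality as the main obstacle has not proved the theorem. The paper closes this step with no amortization at all: it asserts the uniform per-step bound $|\pi_l|\le|\mathcal{A}(t)|-1$ for every $l$, justified in one sentence by the definition of $\pi_l$ and the fact that each user occupies at most one channel, which immediately gives $\sum_l|\pi_l|F_l\le(|\mathcal{A}(t)|-1)\sum_lF_l$. So the missing ingredient relative to the paper is precisely that cardinality bound on each $\pi_l$. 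Your instinct that the channel-capacity mechanism is the delicate point is reasonable --- when the $l$-th pick fills channel $m'$ to its cap, up to $K$ optimal pairs on $m'$ can become infeasible at once, so a bound of order $K$ rather than $|\mathcal{A}(t)|-1$ is what falls out of the naive count, and the paper's one-line justification does not engage with this --- but whether one accepts the paper's per-step bound or pursues the aggregate charging argument you gesture at, your write-up supplies neither, so the factor $1/|\mathcal{A}(t)|$ is not established.
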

\begin{proof} 
According to Lemma~\ref{lemma:5}, it follows that:
\begin{eqnarray}\label{eq:LowBndIneq}
&& \hspace{-0.0in} \Phi(\Omega) \le \Phi(\nu_L) \hspace{-0.025in} + \hspace{-0.025in} \sum_{l=1}^L |\pi_l| F_l %\nonumber\\
\le \Phi(\nu_L) \hspace{-0.025in} + \hspace{-0.025in} (|\mathcal{A}(t)| \hspace{-0.025in} - \hspace{-0.025in} 1)\sum_{l=1}^LF_l 
= |\mathcal{A}(t)|\Phi(\nu_L).
\end{eqnarray}
%where $M(l)$ is the size of $\pi_l$. 
The second inequality is due to the fact that each user can choose at most one channel and there are at most $(|\mathcal{A}(t)|-1)$ pairs in $\pi_l$ according to the definition. The equality in~(\ref{eq:LowBndIneq}) is because $\sum_{l=1}^L F_l=\Phi(\nu_L)$. Then we have:
\begin{eqnarray}\label{eq:LUBnd}
\frac{1}{|\mathcal{A}(t)|}\Phi(\Omega)\le \Phi(\nu_L) \le \Phi(\Omega).
\end{eqnarray}
The greedy heuristic solution is lower bounded by $1/|\mathcal{A}(t)|$ of the global optimum. 
\end{proof}

Define competitive ratio $\chi = \Phi(\nu_L) / \Phi(\Omega) = 1/|\mathcal{A}(t)|$.  Assume all the licensed channels have identical utilization $\eta$. Since $|\mathcal{A}(t)|$ is a random variable, 
%over time slot $t$, 
we take the expectation of $\chi$ and obtain:
\begin{eqnarray}\label{eq:AvgZeta}
\mathbb{E} [\chi]= \eta^M+\sum_{n=1}^M \left( \frac{1}{n} \right) \eta^{M-n}(1-\eta)^n. 
\end{eqnarray}

In Fig.~\ref{fig:RatioVsEta}, we evaluate the impact of channel utilization $\eta$ and the number of licensed channels $M$ on the competitive ratio.  We increase $\eta$ from $0.05$ to $0.95$ in steps of $0.05$ and increase $M$ from $6$ to $12$ in steps of $2$. 
The lower bound~(\ref{eq:LUBnd}) becomes tighter when $\eta$ is larger or when $M$ is smaller. For example, when $\eta=0.6$ and $M=6$, the greedy algorithm solution is guaranteed to be no less than 52.7\% of the global optimal. when $\eta$ is increased to 0.95, the greedy algorithm solution is guaranteed to be no less than 98.3\% of the global optimal. 

%-----------------------------------------------
\begin{figure} [!t]
  \centering
  \includegraphics[width=4.5in, height=3.0in]{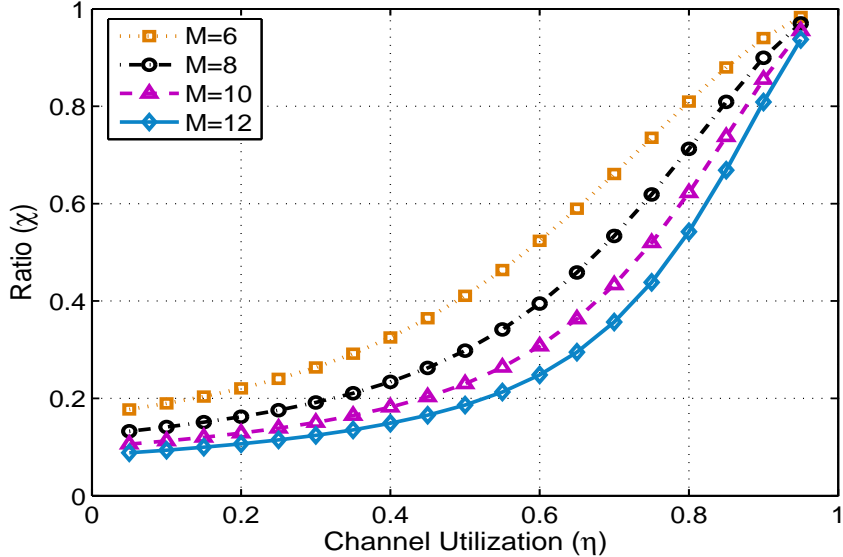}
  \caption{Competitive ratio $\mathbb{E} [\chi]$ defined in (\ref{eq:AvgZeta}) versus channel utilization $\eta$.}
  \label{fig:RatioVsEta}
  %\vspace{-0.15in}
\end{figure}
%-----------------------------------------------

%--------------------------------------------------------------
\subsection{Performance Evaluation \label{sec:sim2}}
%--------------------------------------------------------------

We evaluate the performance of the proposed algorithms with a MATLAB implementation and the JVSM 9.13 Video Codec. We present simulation results for the following two scenarios: (i) a single licensed channel and (ii) multiple licensed channels without channel bonding, since we observe similar performance for the case of multiple licensed channels with channel bonding.  For comparison purpose, we also developed two simpler heuristic schemes that do not incorporate interference alignment.
\begin{itemize}
  \item 
%(i) 
  {\em Heuristic 1}: each CR user selects the best channel in $\mathcal{A}(t)$ 
  based on channel condition. The time slot is equally divided among the active
  users receiving from the same channel, to send their signals separately in each 
  time slice.
  \item 
%(ii) 
  {\em Heuristic 2}: in each time slot, the active user with the best channel is
  selected for each available channel. The entire time slot is used to transmit this
  user's signal. 
\end{itemize}

%--------------------------------------------------------------
\subsubsection{Case of a Single Licensed Channel}
%--------------------------------------------------------------

In the first scenario, there are $K=4$ transmitters, i.e., one BS and three RNs. The channel utilization $\eta$ is set to 0.6 and the maximum allowable collision probability $\gamma$ is set to 0.2. There are three active CR users, each receives an MGS video stream from the BS: {\em Bus} to CR user 1, {\em Mobile} to CR user 2, and {\em Harbor} to CR user 3. The video sequences are in the Common Intermediate Format (CIF, 252$\times$288).  The GOP size of the videos is 16 and the delivery deadline $T$ is 10. The false alarm probability is $\epsilon_l^m=0.3$ and the miss detection probability is $\delta_l^m=0.3$ for all spectrum sensors. The channel bandwidth $B$ is 1 MHz. The peak power limit is 10 W for all the transmitters, unless otherwise specified.

We first plot the average Y-PSNRs of the three reconstructed MGS videos in Fig.~\ref{fig:OneChanPSNR}, i.e., only the Y (Luminance) component of the original and reconstructed videos are used. Among three schemes, the proposed algorithm achieves the highest PSNR value, while the two heuristic algorithms have similar performance. Note that the proposed algorithm is optimal in the single channel case. It achieves significant improvements ranging from 3.1 dB to 5.25 dB over the two heuristic algorithms. Such PSNR gains are considerable, since in video coding and communications, a half dB gain is distinguishable and worth pursing.

%-----------------------------------------------
\begin{figure} [!t]
\centering
\includegraphics[width=4.5in, height=3.0in]{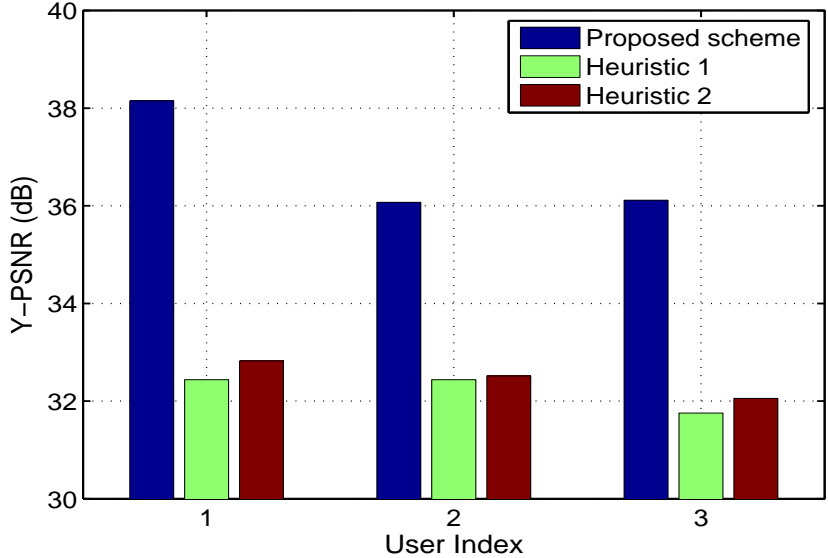}
\caption{Received video quality for each CR user with a single channel.}
\label{fig:OneChanPSNR}
%\vspace{-0.15in}
\end{figure}

We next examine the convergence rate of the distributed algorithm. According to Theorem~\ref{th:2}, the distributed algorithm converges at a speed faster than $1/\sqrt{\tau}$ asymptotically. We compare the optimality gap of the proposed algorithm, i.e., $|q(\tau)-q^{*}|$, with series $10/\sqrt{\tau}$ in Fig.~\ref{fig:CovgRate}. Both curves converge to 0 as $\tau$ goes to infinity. It can be seen that the convergence speed, i.e., the slope of the curve, of the proposed scheme is larger than that of $10/\sqrt{\tau}$ after about $10$ iterations.  
The convergence of the optimality gap is much faster than $10/\sqrt{\tau}$, which exhibits a heavy tail. 

%-----------------------------------------------
\begin{figure} [!t]
\centering
\includegraphics[width=4.5in, height=3.0in]{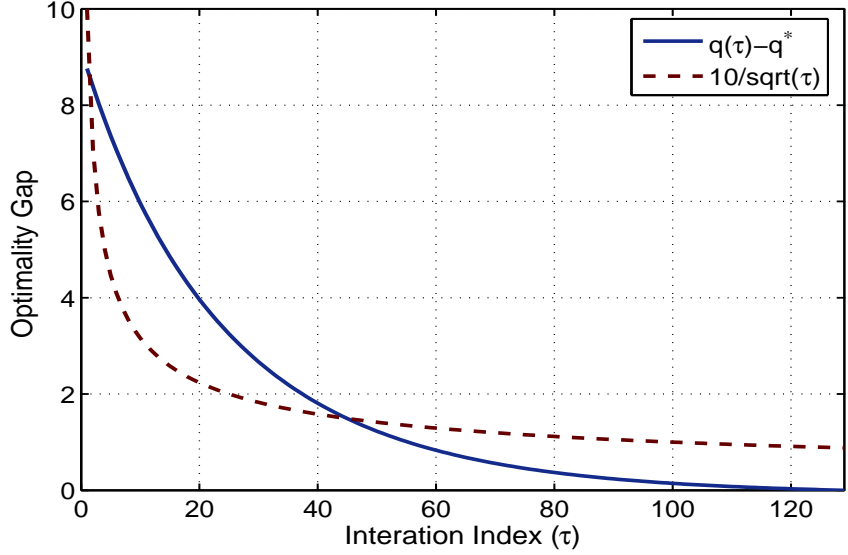}
\caption{Convergence rate of the distributed algorithm with a single channel.}
\label{fig:CovgRate}
%\vspace{-0.15in}
\end{figure}
%-----------------------------------------------

In the case of multiple channels with channel bonding, the performance of the proposed algorithm is similar to that in the single channel case. We omit the results for lack of space.

%--------------------------------------------------------------
\subsubsection{Case of Multiple Channels without Channel Bonding}
%--------------------------------------------------------------

We next investigate the second scenario with six licensed channels and four transmitters. There are 12 CR users, each streaming one of the three different videos {\em Bus}, {\em Mobile}, and {\em Harbor}. 
The rest of the parameters are the same as those in the single channel case, unless otherwise specified. Eq.~(\ref{eq:LowBndIneq}) can also be interpreted as an upper bound on the global optimal, i.e., $\Phi(\Omega)\le |\mathcal{A}(t)| \Phi(\nu_L)$, which is also plotted in the figures. Each point in the following figures is the average of 10 simulation runs with different random seeds. The 95\% confidence intervals are plotted as error bars, which are generally negligible.

The impact of channel utilization $\eta$ on received video quality is presented in Fig.~\ref{fig:MultChanEta}. We increase $\eta$ from $0.3$ to $0.9$ in steps of $0.15$, and plot the Y-PSNRs of reconstructed videos averaged over all the 12 CR users. Intuitively, a smaller $\eta$ allows more transmission opportunities for CR users, thus allowing the CR users to achieve higher video rates and better video quality. This is shown in the figure, in which all four curves decrease as $\eta$ is increased. We also observe that the gap between the upper bound and proposed schemes becomes smaller as $\eta$ gets larger, from 32.65 dB when $\eta=0.3$ to 0.63 dB when $\eta=0.9$. This trend is also demonstrated in Fig.~\ref{fig:RatioVsEta}. The proposed scheme outperforms the two heuristic schemes with considerable gains, ranging from 0.8 dB to 3.65 dB. 

%-----------------------------------------------
\begin{figure} [!t]
\centering
\includegraphics[width=4.5in, height=3.0in]{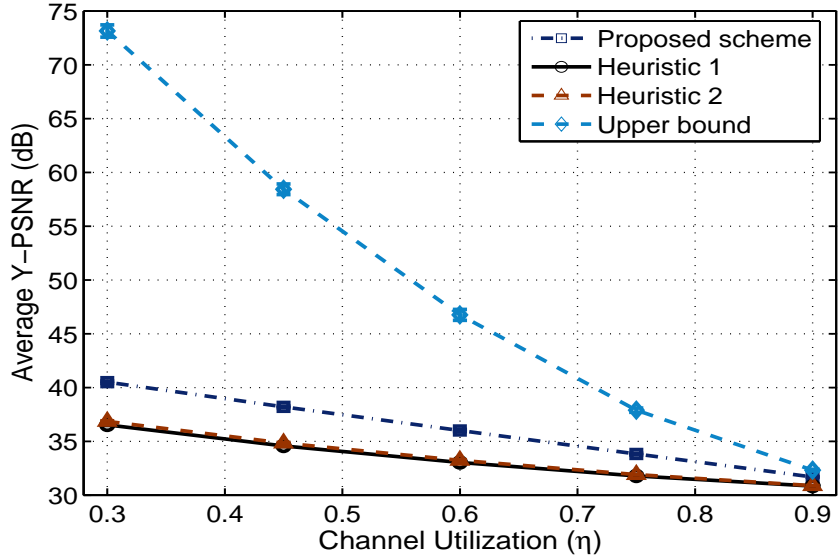}
\caption{Reconstructed video quality vs. channel utilization $\eta$ in the multi-channel without channel bonding case.}
\label{fig:MultChanEta}
%\vspace{-0.15in}
\end{figure}
%-----------------------------------------------

Finally, we investigate the impact of the number of transmitters $K$ on the video quality. In this simulation we increase $K$ from 2 to 6 with step size 1. The average Y-PSNRs of all the 12 CR users are plotted in Fig.~\ref{fig:MultChanK}. As expected, the more transmitters, the more effective the interference alignment technique, and thus the better the video quality. The proposed algorithm achieves gains ranging from 1.78 dB (when $K=2$) to 4.55 dB (when $K=6$) over the two heuristic schemes. 

%-----------------------------------------------
\begin{figure} [!t]
\centering
\includegraphics[width=4.5in, height=3.0in]{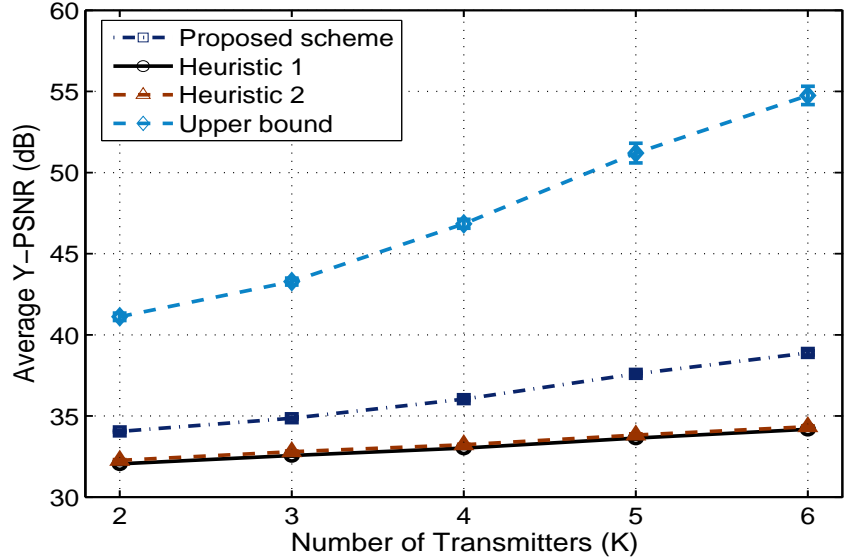}
\caption{Reconstructed video quality vs. number of transmitters $K$ in the multi-channel without channel bonding case.}
\label{fig:MultChanK}
%\vspace{-0.15in}
\end{figure}
%-----------------------------------------------

%-------------------------------------------------------------------
% Conclusion
\section{Conclusions}\label{sec:coop_conc}
In this paper, we first studied the problem of cooperative relay in CR networks.  We modeled the two cooperative relay strategies, i.e., DF and AF, which are integrated with $p$-Persistent CSMA.  We 
analyzed their throughput performance and compared them under various parameter ranges.  
Cross-point with the AF and DF curves are found when some parameter is varied, indicating that each of them performs better in a certain parameter range; there is no case of dominance for the two strategies.  
Considerable gains were observed over conventional DL transmissions, as achieved by exploiting cooperative diversity with the cooperative relays in CR networks. 

Then, we investigated the problem of interference alignment for MGS video streaming in a cooperative relay enhanced CR network. We presented a stochastic programming formation, and derived a reformulation that leads to considerable reduction in computational complexity. A distributed optimal algorithm was developed for the case of a single channel and the case of multi-channel with channel bonding, with proven convergence and convergence speed. We also presented a greedy algorithm for the multi-channel without channel bonding case, with a proven performance bound. The proposed algorithms are evaluated with simulations and are shown to outperform two heuristic schemes without interference alignment with considerable gains.

%-------------------------------------------------------------------
\bibliographystyle{IEEEtran}
\bibliography{cr_video_femto,MyWork}
%-------------------------------------------------------------------

\end{document}